\renewcommand\normalsize{%
   \@setfontsize\normalsize\@xpt{14}%
   \abovedisplayskip 10\p@ \@plus2\p@ \@minus5\p@
   \abovedisplayshortskip \z@ \@plus3\p@
   \belowdisplayshortskip 6\p@ \@plus3\p@ \@minus3\p@
   \belowdisplayskip \abovedisplayskip
   \let\@listi\@listI}
\renewcommand\small{%
   \@setfontsize\small\@ixpt{12}%
   \abovedisplayskip 8.5\p@ \@plus3\p@ \@minus4\p@
   \abovedisplayshortskip \z@ \@plus2\p@
   \belowdisplayshortskip 4\p@ \@plus2\p@ \@minus2\p@
   \def\@listi{\leftmargin\leftmargini
               \topsep 4\p@ \@plus2\p@ \@minus2\p@
               \parsep 2\p@ \@plus\p@ \@minus\p@
               \itemsep \parsep}%
   \belowdisplayskip \abovedisplayskip
}
\renewcommand\footnotesize{%
   \@setfontsize\footnotesize\@viiipt{10}%
   \abovedisplayskip 6\p@ \@plus2\p@ \@minus4\p@
   \abovedisplayshortskip \z@ \@plus\p@
   \belowdisplayshortskip 3\p@ \@plus\p@ \@minus2\p@
   \def\@listi{\leftmargin\leftmargini
               \topsep 3\p@ \@plus\p@ \@minus\p@
               \parsep 2\p@ \@plus\p@ \@minus\p@
               \itemsep \parsep}%
   \belowdisplayskip \abovedisplayskip
}
\renewcommand\scriptsize{\@setfontsize\scriptsize\@viipt\@viiipt}
\renewcommand\tiny{\@setfontsize\tiny\@vpt\@vipt}
\renewcommand\large{\@setfontsize\large\@xipt{15}}
\renewcommand\Large{\@setfontsize\Large\@xiipt{16}}
\renewcommand\LARGE{\@setfontsize\LARGE\@xivpt{18}}
\renewcommand\huge{\@setfontsize\huge\@xxpt{30}}
\renewcommand\Huge{\@setfontsize\Huge{24}{36}}
\renewcommand{\cite}{\citep}
\def\vcdots{\vbox{\baselineskip4\p@ \lineskiplimit\z@
    \kern3\p@\hbox{.}\hbox{.}\hbox{.}\kern3\p@}}
\colorlet{TufteRed}{red!80!black}
\definecolor{theblue}{RGB}{0,0,180}
\colorlet{thered}{TufteRed}
\apptocmd{\sloppy}{\hbadness 10000\relax}{}{}
\apptocmd{\thebibliography}{\raggedright}{}{}
\theoremstyle{break}
\newtheorem{theorem}{Theorem}
\newtheorem{lemma}[theorem]{Lemma}
\newtheorem{corollary}[theorem]{Corollary}
\theoremstyle{nonumberplain}
\newtheorem{proof}{Proof}
\theoremstyle{plain}
\renewcommand*{\backref}[1]{}
\renewcommand*{\backrefalt}[4]{%
  \ifcase #1 %
    No citations.%
  \or
    Cited on page #2.%
  \else
    Cited on pages #2.%
  \fi
}
\newcommand{\secref}[1]{\hyperref[sec:#1]{section~\ref*{sec:#1}}}
\newcommand{\Secref}[1]{\hyperref[sec:#1]{Section~\ref*{sec:#1}}}
\newcommand{\secrefp}[1]{\hyperref[sec:#1]{(section~\ref*{sec:#1})}}
\newcommand{\thmref}[1]{\hyperref[thm:#1]{theorem~\ref*{thm:#1}}}
\newcommand{\Thmref}[1]{\hyperref[thm:#1]{Theorem~\ref*{thm:#1}}}
\newcommand{\thmrefp}[1]{\hyperref[thm:#1]{(theorem~\ref*{thm:#1})}}
\newcommand{\figrefp}[1]{\hyperref[fig:#1]{(figure~\ref*{fig:#1})}}
\newcommand{\figref}[1]{\hyperref[fig:#1]{figure~\ref*{fig:#1}}}
\newcommand{\Figref}[1]{\hyperref[fig:#1]{Figure~\ref*{fig:#1}}}
\newcommand{\floor}[1]{\lfloor #1 \rfloor }
\newcommand{\lp}{\left( }
\newcommand{\rp}{\right) }
\newcommand{\lrp}[1]{\left( #1 \right)}
\newcommand{\eoe}[2]{\ve_{#1}\kron \ve_{#2}}
\newcommand{\inv}{^{-1}}
\newcommand{\hvv}{\hat{\vv}}
\newcommand{\hvxj}[2]{\ensuremath{\hat{\boldsymbol{#1}}^{(#2)}}}
\newcommand{\vxj}[2]{\ensuremath{\boldsymbol{#1}^{(#2)}}}
\newcommand{\nnz}[1]{\mbox{nnz}(#1)}
\newcommand{\gexpm}{\texttt{gexpm}\xspace}
\newcommand{\gexpmq}{\texttt{gexpmq}\xspace}
\newcommand{\expmimv}{\texttt{expmimv}\xspace}
\newcommand{\expmv}{\texttt{expmv}\xspace}
\newcommand{\expof}[1]{\exp\left\{#1\right\}}
\newcommand{\epec}{\expof{\mP}\ve_c}
\newcommand{\blvec}[1]{\bmat{ #1_0 \\ #1_1 \\ \vdots \\ \vdots \\ #1_N }}
\renewcommand{\abstract}{}
\title{Sublinear Column-wise Actions of the\\
Matrix Exponential on Social Networks
}
\author{David F.~Gleich \\ Computer Science Department \and Kyle Kloster \\ Mathematics Department \and
Purdue University\\
\{dgleich, kkloste\}@purdue.edu
}
\begin{document}

\maketitle

\begin{abstract}
We consider stochastic transition matrices from large social and information networks. For these matrices, we describe and evaluate three fast methods to estimate one column of the matrix exponential. The methods are designed to exploit the properties inherent in social networks, such as a power-law degree distribution. Using only this property, we prove that one of our algorithms has a sublinear runtime. We present further experimental evidence showing that all of them run quickly on social networks with billions of edges and accurately identify the largest elements of the column.
\end{abstract}

\section{Introduction}\label{sec:int}

Matrix exponentials are used for node centrality~\cite{Estrada-2000-index,farahat2006-exphits,Estrada-2010-matrix-functions}, link prediction~\cite{Kunegis-2009-learning-spectral}, graph kernels~\cite{Kondor-2002-diffusion}, and clustering~\cite{chung2007-pagerank-heat}. In the majority of these problems, only a rough approximation of a column of the matrix exponential is needed. Here we present methods for fast approximations of 
\[ \expof{\mP}\ve_c, \]
where $\mP$ is a column-stochastic matrix and $\ve_c$ is the $c$th column of the identity matrix. This suffices for many applications and also allows us to compute $\expof{-\hat{\mL}} \ve_c$ where $\hat{\mL}$ is the normalized Laplacian.

To state the problem precisely and fix notation, let $\mG$ be a graph adjacency matrix of a directed graph and let $\mD$ be the diagonal matrix of out-degrees, where $\mD_{ii} = d_i$, the degree of node $i$. For simplicity, we assume that all nodes have positive out-degrees, thus, $\mD$ is invertible. The methods we present are designed to work for $\mP = \mG\mD\inv$ and, by extension, the negative normalized Laplacian $-\hat{\mL} = \mD^{-1/2}\mG\mD^{-1/2} - \mI$. This is because the relationship
\[
\expof{\mD^{-1/2}\mG\mD^{-1/2} - \mI} = e\inv \mD^{-1/2} \expof{\mG\mD\inv} \mD^{1/2}
\]
implies $\expof{-\hat{\mL}}\ve_c = \sqrt{d_c}e\inv \mD^{-1/2}\expof{\mP}\ve_c$, which allows computation of either column given the other, at the cost of scaling the vector.

\subsection{Previous work}

Computing the matrix exponential for a general matrix $\mA$ has a rich
and ``dubious" history~\cite{Moler-2003-19}.  For any matrix $\mA \in
\mathbb{R}^{n \times n}$ and vector $\vb \in \mathbb{R}^{n}$, one
approach is to use a Taylor polynomial approximation
\[
\expof{\mA}\vb \approx \sum_{j=0}^N \tfrac{1}{j!} \mA^j \vb.
\]
This sequence converges to the correct vector as $N \to \infty$ for
any square matrix, however, it can be problematic numerically.   A
second approach is to first compute an $m \times m$ upper-Hessenberg
form of $\mA$, $\mH_m$, via an $m$-step Krylov method, $\mA \approx
\mV_m \mH_m \mV_m^T$. Using this form, we can approximate
$\expof{\mA}\vb \approx \mV_m \expof{\mH_m}\ve_1$ by performing
$\expof{\mH_m}\ve_1$ on the much smaller, and better controlled,
upper-Hessenberg matrix $\mH_m$. These concepts underlie many standard
methods for obtaining $\expof{\mA}\vb$.

Although the Taylor and Krylov approaches are fast and accurate -- see
references \cite{hoch-lubi-97}, \cite{gall-saad-92}, and
\cite{Al-Mohy-2011-exponential} for the numerical analysis -- existing
implementations depend on repeated matrix-vector products with the
matrix $\mA$. The Krylov-based algorithms also require
orthogonalization steps between successive vectors.
When these algorithms are used to compute exponentials of graphs with
small diameter, like the social networks we consider here, the
repeated matrix-vector products cause the vectors involved to become
dense after only a few steps. The subsequent matrix-vector products
between the sparse matrix and dense vector require $O(|E|)$ work,
where $|E|$ is the number of edges in the graph (and there are
$O(|E|)$ non-zeros in the sparse matrix).
This leads to a runtime bound of $O(T|E|)$ if there are $T$ matrix
vector products after the vectors become dense.

There are a few recent improvements to the Krylov methods that reduce
the number of terms $T$ that must be
used~\cite{Sidje-1998-ExpoKit,Orecchia-2012-exponential,Afanasjew-2008-restarted,Al-Mohy-2011-exponential}
or present additional special cases~\cite{Benzi-2010-quadrature}. Both
\citet{Orecchia-2012-exponential} and \citet{Al-Mohy-2011-exponential}
present a careful bound on the maximum number of terms $T$.  
\Citet{Orecchia-2012-exponential} presents a new
polynomial approximation for $\exp x$ that improves on the Taylor
polynomial approach and uses this to give a tight bound on the
necessary number of matrix-vector products in the case of a general
symmetric positive semidefinite matrix $\mA$. \Citet{Al-Mohy-2011-exponential} presents
a bound on the number of Taylor terms for a matrix with bounded
norm.

Thus, the best runtimes provided by existing methods are $O(|E|)$ for
the stochastic matrix of a graph. It should be noted, however, that
the algorithms we present in this paper operate in the specific
context of matrices with 1-norm bounded by 1, and where the vector
$\vb$ is sparse with only 1 non-zero. In contrast, the existing
methods we mention here apply more broadly.

In the case of exponentials of sparse graphs, Chung and Simpson
developed a Monte Carlo procedure to estimate
columns~\cite{Chung-2013-solving}, like our method. They show that
only a small number of random walks are needed to compute reasonably
accurate solutions; however, the number of walks grows quickly with
the desired accuracy. They also prove their algorithm runs in time
polylogarithmic in $1/\eps$, although the $\eps$ accuracy is achieved
in a degree-weighted infinity norm, making the
computational goal distinct from our own. Our accuracy result is in the $1$-norm,
which provides uniform control over the error.

We note that for a general sparse graph it is impossible to get a work
bound that is better than $O(n)$ for computing $\expof{\mA}\ve_c$ with accuracy $\eps$ in the
1-norm, even if $\mA$ has only $O(n)$
nonzeros. For example, the star graph on $n$ nodes requires
$\Omega(n)$ work to compute certain columns of its exponential, as
they have $O(n)$ nonzero entries of equal magnitude and hence cannot
be approximated with less than $O(n)$ work. This shows there cannot be
a sublinear upperbound on work for accurately approximating general
columns of the exponential of arbitrary sparse graphs.

We are able to obtain our sublinear work bound by assuming structure
in the degree distribution of the underlying graph. Another case where
it is possible to show that sublinear algorithms are possible is when
the matrices are banded, as considered by \citet{Benzi-2007-decay}.
Banded matrices correspond to graphs that look like the line-graph
with up to $d$ connections among neighbors. If $d$ is sufficiently
small, or constant, then the exponential localizes and sublinear
algorithms are possible. However, this case is unrealistic for social
networks with highly skewed degree distributions.

\subsection{Our contributions}
For networks with billions of edges, we want a procedure that avoids the dense vector operations involved in Taylor- and Krylov-based methods. In particular, we would like an algorithm to estimate a column of the matrix exponential that runs in time proportional to the number of \emph{large} entries. Put another way, we want an algorithm that is \emph{local} in the graph and produces a \emph{local solution}.

Roughly speaking, a \textit{local solution} is a vector that both accurately approximates the true solution, which can be dense, \emph{and} has only a few non-zeros. A local algorithm is then a method for computing such a solution that requires work proportional to the size of the solution, rather than the size of the input. In the case of computing a column of the matrix exponential for a network with $|E|$ edges, the input size is $|E|$, but the desired solution $\epec$ has only a few significant entries. An illustration of this is given in Figure~\ref{fig:example}. From that figure, we see that the column of the matrix exponential has about 5 million non-zero entries. However, if we look at the approximation formed by the largest $3,000$ entries, it has a 1-norm error of roughly $10^{-4}$. A local algorithm should be able to find these $3,000$ non-zeros without doing work proportional to $|E|$. For this reason, local methods are a recognized and practical alternative to Krylov methods for solving massive linear systems from network problems; see, for instance, references~\cite{andersen2006-local,Bonchi-2012-fast-katz}. The essence of these methods is that they replace whole-graph matrix-vector products with targeted column-accesses; these correspond to accessing the out-links from a vertex in a graph structure. 

\begin{figure}
\centering
\includegraphics[width=0.5\linewidth]{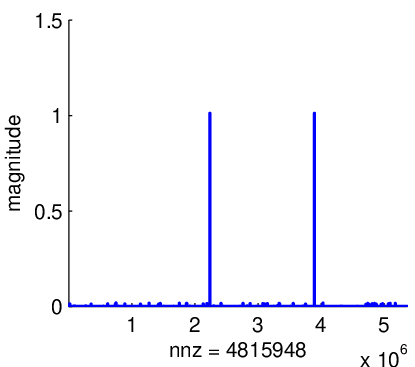}%
\includegraphics[width=0.5\linewidth]{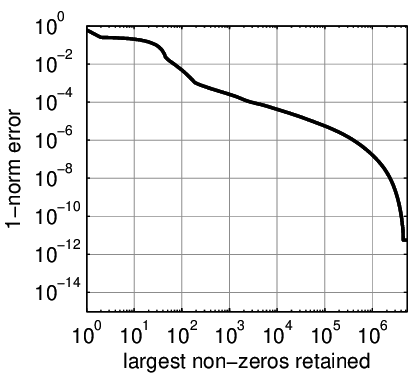}
\caption{(At left) A column of the matrix exponential from the livejournal graph with 5M vertices and 78M directed edges shows only two large entries and a total of 4.8M numerically non-zero entries.  (At right) The second figure shows the solution error in the 1-norm as only the largest entries are retained. This shows that there is a solution with 1-norm error of $10^{-4}$ with around 3,000 non-zeros entries. These plots illustrate that the matrix exponential can be localized in a large network and we seek local algorithms that will find these solutions without exploring the entire graph.}
\label{fig:example}

\includegraphics{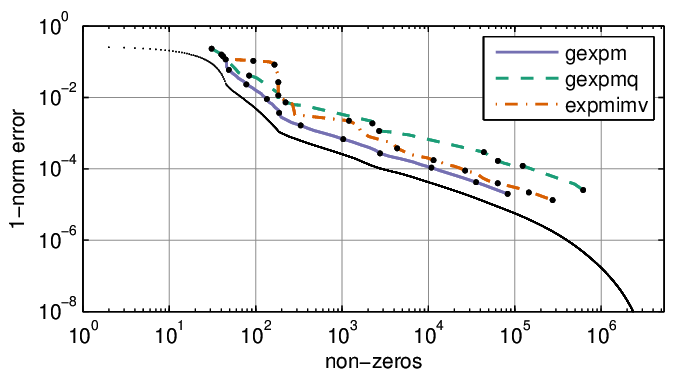}
\caption{The result of running our three algorithms to approximate the vector studied in Figure~\ref{fig:example}. The small black dots show the optimal set of non-zeros chosen by \emph{sorting} the true vector.  The three curves show the results of running our algorithms as we vary the desired solution tolerance. Ideally, they would follow the tiny black dots exactly. Instead, they closely approximate this optimal curve with \gexpm having the best performance. }
\label{fig:localized-vectors}
\end{figure}

In this paper, we present three algorithms that approximate a specified column of $\expof{\mP}$ where $\mP$ is a sparse matrix satisfying $\|\mP\|_1 \leq 1$ (Section~\ref{sec:alg}). The main algorithm we discuss and analyze uses coordinate relaxation (Section~\ref{sec:gs}) on a linear system to approximate a degree $N$ Taylor polynomial (Section~\ref{sec:tay}). This coordinate relaxation method yields approximations guaranteed to satisfy a prescribed error $\eps$. For arbitrary graphs with maximum degree $d$, the error after $l$ iterations of the algorithm we call \gexpm is bounded by $O(l^{-1/(2d)})$ as shown in Theorem~\ref{thm:conv:gexpm}. Given an input error $\eps$, the runtime to produce a solution vector with 1-norm error less than $\eps$ is thus sublinear in $n$ for graphs with $d \leq O(\log \log n)$ as shown in our prior work~\cite{Kloster-2013-nexpokit}.

This doubly logarithmic scaling of the maximum degree is unrealistic for social and information networks, where highly skewed degree distributions are typical. Therefore, in Section~\ref{sec:anl:pow} we consider graphs with a power-law degree distribution, a property ubiquitous in social networks \cite{Faloutsos-1999-power-law,barabasi1999-scaling}. By using this added assumption, we can show that for a graph with a particular power-law distribution, maximum degree $d$, and minimum degree $\delta$, the \gexpm algorithm produces a 1-norm error of $\eps$ in work that scales roughly as $d^2\log(d)^2$, and with total work bounded by $O(\log(1/\eps) \left(1/\eps\right)^{3\delta/2} d^2\log(d) \max\{ \log(d), \log(1/\eps) \} )$ (Theorem~\ref{thm:work}). As a corollary, this theorem \emph{proves} that columns of $\epec$ are localized. 

Our second algorithm, \gexpmq, is a faster heuristic approximation of this algorithm. It retains the use of coordinate descent, but changes the choice of coordinate to relax to something that is less expensive to compute. It retains the rigorous convergence guarantee but loses the runtime guarantee. 

The final method, \expmimv, differs from the first two and does not use coordinate relaxation. Instead, it uses \emph{sparse matrix-vector products} with only the $z$ largest entries of the previous vector to avoid fill-in. This leads to a guaranteed runtime bound of $O(d z \log z)$, discussed in Theorem~\ref{thm:expmimv}, but with no accuracy guarantee. Our experiments in Section~\ref{sec:exp:run} show that this method is orders of magnitude faster than the others. 

Figure~\ref{fig:localized-vectors} compares the results of these algorithms on the same graph and vector from Figure~\ref{fig:example} as we vary the desired solution tolerance $\eps$ for each algorithm. These results show that the algorithms all track the optimal curve, and sometimes closely! In the best case, they compute solutions with roughly three times the number of non-zeros as in the optimal solution; in the worst case, they need about 50 times the number of non-zeros. In the interest of full disclosure, we note that we altered the algorithms slightly for this figure. Namely, we removed a final step that significantly increases the number of non-zeros by making many tiny updates to the solution vector; these updates are so small that they do not alter the accuracy by more than a factor of 2. We also fixed an approximation parameter based on the Taylor degree to aid comparisons as we varied $\eps$. 

As we finish our introduction, let us note that the source code for all of our experiments and methods is available online.\footnote{\url{https://www.cs.purdue.edu/homes/dgleich/codes/nexpokit/}} The remainder of the paper proceeds in a standard fashion by establishing the formal setting (Section~\ref{sec:bkg}), then introducing our algorithms (Section~\ref{sec:alg}), analyzing them (Section~\ref{sec:anl}, Section~\ref{sec:anl:pow}), and then showing our experimental evaluation (Section~\ref{sec:exp}). This paper extends our conference version~\cite{Kloster-2013-nexpokit} by adding the theoretical analysis with the power-law, presenting the \expmimv method, and tightening the convergence criteria for \gexpmq. Furthermore, we conduct an entirely new set of experiments on graphs with billions of edges.


\section{Background}\label{sec:bkg} 
The algorithm that we employ utilizes a Taylor polynomial approximation of $\epec$. 
Here we provide the details of the Taylor approximation for the exponential of a general matrix. We also review the coordinate relaxation method we use in two of our algorithms.

Although the algorithms presented in subsequent sections are designed to work for $\mP$, much of the theory in this section applies to any matrix $\mA \in \mathbb{R}^{n \times n}$. Thus, we present it in its full generality. For sections in which the theory is restricted to $\mP$, we explicitly state so. Our rule of thumb is that we will use $\mA$ as the matrix when the result is general and $\mP$ when the result requires properties specific to our setting.

\subsection{Approximating with Taylor Polynomials}\label{sec:tay}
The Taylor series for the exponential of a matrix $\mA \in \mathbb{R}^{n\times n}$ is given by
\[
\expof{\mA} = \mI + \tfrac{1}{1!}\mA^1 + \tfrac{1}{2!}\mA^2 + \cdots + \tfrac{1}{k!}\mA^k + \cdots
\]
and it converges for any square matrix $\mA$. By truncating this infinite series to $N$ terms, we may define 
\[ T_N(\mA) := \sum_{j=0}^N \tfrac{1}{j!}\mA^j, \] 
and then approximate $\expof{\mA} \vb \approx T_N(\mA) \vb$. For general $\mA$ this polynomial approximation can lead to inaccurate computations if $\|\mA\|$ is large and $\mA$ has oppositely signed entries, as the terms $\mA^j$ can then contain large, oppositely-signed entries that cancel only in exact arithmetic. However, our aim is to compute $\epec$ specifically for a matrix of bounded norm, that is, $\|\mP\|_1 \leq 1$. In this setting, the Taylor polynomial approximation is a reliable and accurate tool. What remains is to choose the degree $N$ to ensure the accuracy of the Taylor approximation makes $ \| \epec - T_N(\mP) \ve_c \| $ as small as desired.

\paragraph{Choosing the Taylor polynomial degree}
Accuracy of the Taylor polynomial approximation requires a sufficiently large Taylor degree, $N$. On the other hand, using a large $N$ requires the algorithms to perform more work. A sufficient value of $N$ can be obtained algorithmically by exactly computing the number of terms of the Taylor polynomial required to compute $\exp(1)$ with accuracy $\eps$. Formally: 
\[ N = \arg \min_k \left\{ k \text{ where } \left( e - \sum_{\ell = 0}^k \frac{1}{\ell !} \right) \le \eps \right\}. \] 
We provide the following simple upper bound on $N$: 
\begin{lemma}\label{lem:Nbound}
Let $\mP$ and $\vb$ satisfy $\|\mP \|_1, \|\vb\|_1 \leq 1$. Then choosing the degree, $N$, of the Taylor approximation, $T_N(\mP)$, such that $N \geq 2\log(1/\eps)$ and $N \geq 3$ will guarantee
\[
\|\expof{\mP} \vb - T_N(\mP)\vb \|_1 \leq \eps
\]
\end{lemma}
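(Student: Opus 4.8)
The plan is to control the error by the tail of the matrix Taylor series and then reduce everything to a short scalar estimate on a factorial. First I would write the error exactly as $\expof{\mP}\vv - T_N(\mP)\vv = \sum_{j=N+1}^{\infty}\tfrac{1}{j!}\mP^j\vv$, take $1$-norms, and use submultiplicativity of the induced $1$-norm together with the hypotheses $\|\mP\|_1 \le 1$ and $\|\vv\|_1 \le 1$ to obtain
\[
\|\expof{\mP}\vv - T_N(\mP)\vv\|_1 \;\le\; \sum_{j=N+1}^{\infty}\tfrac{1}{j!}\,\|\mP\|_1^{\,j}\,\|\vv\|_1 \;\le\; \sum_{j=N+1}^{\infty}\tfrac{1}{j!}.
\]
This is the step where the bounded-norm assumption does all the work: for a general $\mA$ the tail need not be dominated by the scalar exponential series, which is precisely why the paper restricts to $\|\mP\|_1\le 1$ here.

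Second, I would estimate the scalar tail in closed form. Factoring out $\tfrac{1}{(N+1)!}$ and bounding the remaining factors termwise by a geometric series of ratio $\tfrac{1}{N+2}$ gives $\sum_{j=N+1}^{\infty}\tfrac{1}{j!} \le \tfrac{1}{(N+1)!}\cdot\tfrac{N+2}{N+1} \le \tfrac{2}{(N+1)!}$, valid for every $N\ge 0$. So it suffices to show that the hypotheses $N \ge 2\log(1/\eps)$ and $N\ge 3$ together force $(N+1)! \ge 2/\eps$.

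Third, for that implication I would observe that $N \ge 2\log(1/\eps)$ yields $2/\eps \le 2e^{N/2}$, so it is enough to prove $(N+1)! \ge 2e^{N/2}$ for every integer $N\ge 3$. The base case $N=3$ reads $24 \ge 2e^{3/2}\approx 8.96$, which holds; and passing from $N$ to $N+1$ multiplies the left side by $N+2$ and the right side by $e^{1/2}$, so since $N+2 > e^{1/2}$ for all $N\ge 0$ the inequality propagates by induction. Chaining the three bounds gives $\|\expof{\mP}\vv - T_N(\mP)\vv\|_1 \le \tfrac{2}{(N+1)!} \le \eps$.

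The only mildly delicate point is this last step: one wants a factorial lower bound that is clean enough to match the stated threshold $2\log(1/\eps)$ with a small additive floor, rather than invoking Stirling's formula, which would introduce constants cluttering the bound. Everything else is a routine application of norm submultiplicativity and a geometric-series comparison, and the requirement $N\ge 3$ is exactly what is needed to start the induction, i.e. to cover the regime where $2\log(1/\eps)$ is itself small.
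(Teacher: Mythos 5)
Your proposal is correct, and its first two steps coincide with the paper's: bound the error by the scalar tail $\sum_{j=N+1}^{\infty}1/j!$ using $\|\mP\|_1,\|\vv\|_1\le 1$, then factor out the leading factorial and compare with a geometric series (you use ratio $\tfrac{1}{N+2}$ where the paper uses the slightly looser $\tfrac{1}{N+1}$, giving $\tfrac{1}{N!\,N}$ instead of your $\tfrac{2}{(N+1)!}$; both suffice). The only genuine divergence is the final factorial estimate. The paper proves $2\log(N!)\ge N\log N$ by pairing the terms of $\sum_{k=0}^{N-1}\log(k+1)$ with the reversed sum, so that $2\log(N!)=\sum_{k=0}^{N-1}\log\bigl((k+1)(N-k)\bigr)\ge N\log N$, and then the hypothesis $N\ge 3$ enters as $\log N>1$, reducing the requirement to $N\ge 2\log(1/\eps)$. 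You instead prove $(N+1)!\ge 2e^{N/2}$ by induction from the base case $N=3$, using $N+2>e^{1/2}$ for the inductive step. Both arguments are elementary and both verify cleanly; yours is more self-contained and makes the role of $N\ge 3$ explicit as the start of the induction, while the paper's pairing identity is a reusable lower bound on $\log(N!)$ that avoids any induction. There is no gap in either route.
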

We present the proof, which does not inform our current exposition, in Appendix~\ref{sec:app}. Because Lemma~\ref{lem:Nbound} provides only a loose bound, we display in Table \ref{tab:taylordeg} values of $N$ determined via explicit computation of $\exp(1)$, which are tight in the case that $\|\mP\|_1 = 1$.

\begin{table}
\centering
\caption{Choosing the degree, $N$, of a Taylor polynomial to ensure an accuracy of $\eps$ shows that the bound from Lemma~\ref{lem:Nbound} is not tight, and that both methods are slowly growing.} \label{tab:taylordeg}
\begin{tabularx}{0.5\linewidth}{XXX} \toprule
$\eps$ desired & $N$  predicted by Lemma~\ref{lem:Nbound} & $N$ required \\ \midrule
$10^{-5}$ & 24 & 8 \\
$10^{-10}$ & 46 & 13 \\
$10^{-15}$ & 70 & 17 \\
 \bottomrule
\end{tabularx}
\end{table}

\subsection{Error from Approximating the Taylor Approximation}
The methods we present in Section~\ref{sec:alg} produce an approximation of the Taylor polynomial expression $T_N(\mP)\ve_c$, which itself approximates $\epec$. Thus, a secondary error is introduced. Let $\vx$ be our approximation of $T_N(\mP)\ve_c$. We find
\[
\| \epec - \vx \| \leq \| \epec - T_N(\mP)\ve_c \| + \| T_N(\mP)\ve_c - \vx \|,
\]
by the triangle inequality. Lemma~\ref{lem:Nbound} guarantees the accuracy of only the first term; so if the total error of our final approximation $\vx$ is to satisfy $\| \epec - \vx \|_1 \leq \eps $, then we must guarantee that the right-hand summand is less than $\eps$. More precisely, we want to ensure for some $\theta \in (0,1)$ that the Taylor polynomial satisfies $\| \epec - T_N(\mP)\ve_c \|_1 \leq \theta \eps$ and, additionally, our computed  approximation $\vx$ satisfies $\| T_N(\mP)\ve_c - \vx \|_1 \leq (1-\theta) \eps$. We pick $\theta = 1/2$, although we suspect there is an opportunity to optimize this term.

\subsection{The Gauss-Southwell Coordinate Relaxation Method}\label{sec:gs}
One of the algorithmic procedures we employ is to solve a linear system via Gauss-Southwell. The Gauss-Southwell (GS) method is an iterative method related to the Gauss-Seidel and coordinate descent methods~\cite{Luo1992-coordinate-descent}. In solving a linear system $\mA\vx = \vb$ with current solution $\vxj{x}{k}$ and residual $\vxj{r}{k} = \vb - \mA \vxj{x}{k}$, the
GS iteration acts via coordinate relaxation on the largest magnitude entry of the residual at each step, whereas the Gauss-Seidel method repeatedly cycles through all elements of the residual.
Like Gauss-Seidel, the GS method converges on diagonally dominant matrices, symmetric positive definite matrices, and $M$-matrices. It is strikingly effective when the underlying system is sparse and the solution vector can be approximated locally. Because of this, the algorithm has been reinvented in the context of local PageRank computations~\cite{andersen2006-local,berkhin2007-bookmark,jeh2003-personalized}. Next we present the basic iteration of GS.

Given a linear system $\mA \vx = \vb$ with initial solution $\vxj{x}{0} = 0$ and residual $\vxj{r}{0} = \vb$, GS proceeds as follows. To update from step $k$ to step $k+1$, set $m^{(k)}$ to be the maximum magnitude entry of $\vxj{r}{k}$, i.e. $m^{(k)} := (\vxj{r}{k})_{i_k}$; then, update the solution and residual:
\begin{equation}\label{eqn:gs}
\begin{aligned}
\vxj{x}{k+1} &= \vxj{x}{k} + m^{(k)} \cdot \ve_{i_k} & \text{ \emph{update the $i_k$th coordinate only} }\\
\vxj{r}{k+1} &= \vxj{r}{k} - m^{(k)} \cdot \mA \ve_{i_k} & \text{ \emph{update the residual}. }
\end{aligned}
\end{equation}
Observe that updating the residual $\vxj{r}{k}$ in \eqref{eqn:gs} involves adding only a scalar multiple of a column of $\mA$ to $\vxj{r}{k}$.
If $\mA$ is sparse, then the whole step involves updating a single entry of the solution $\vxj{x}{k}$, and only a small number of entries of $\vxj{r}{k}$. When $\mA = \mP$, the column-stochastic transition matrix, then updating the residual involves accessing the out-links of a single node. 

The reason that Gauss-Southwell is called a ``coordinate relaxation'' method is that it can be derived by \emph{relaxing} or \emph{freeing} the $i_k$th coordinate to satisfy the linear equations in that coordinate only. For instance, suppose for the sake of simplicity that $\mA$ has 1s on its diagonal and let $\va_{i_k}^T$ be the $i_k$th row of $\mA$. Then at the $k$th step, we choose $\vx\itn{k+1}$ such that $\va_{i_k}^T \vx\itn{k+1} = b_{i_k}$, but we allow only $x_{i_k}$ to vary -- it was the coordinate that was relaxed. Because $\mA$ has 1s on its diagonal, we can write this as: \[ x_{i_k}\itn{k+1} = b_{i_k} - \sum_{j \not= i_{k}} A_{i_k,j} x_{j}\itn{k} = (\vxj{r}{k})_{i_k} + x_{i_k}\itn{k}. \]
This is exactly the same update as in \eqref{eqn:gs}. It's also the same update as in the Gauss-Seidel method. The difference with Gauss-Seidel, as it is typically explained, is that it does not maintain an explicit residual and it chooses coordinates cyclically.  

\section{Algorithms}\label{sec:alg}
We now present three algorithms for approximating $\epec$ designed for matrices from sparse networks satisfying $\| \mP \|_1 \leq 1$.
Two of the methods consist of coordinate relaxation steps on a linear system, $\mM$, that we construct from a Taylor polynomial approximating $\expof{\mP}$, as explained in Section \ref{sec:linsys}.
The first algorithm, which we call \gexpm, applies Gauss-Southwell to $\mM$ with sparse iteration vectors $\vx$ and $\vr$, and tracks elements of the residual in a heap to enable fast access to the largest entry of the residual.
The second algorithm is a close relative of \gexpm, but it stores \emph{significant entries} of the residual in a queue rather than maintaining a heap. This makes it faster, and also turns out to be closely related to a truncated Gauss-Seidel method. Because of the queue, we call this second method \gexpmq.
The bulk of our analysis in Section~\ref{sec:anl} studies how these methods converge to an accurate solution.

The third algorithm approximates the product $T_N(\mP)\ve_c$ using Horner's rule on the polynomial $T_N(\mP)$, in concert with a procedure we call an ``incomplete" matrix-vector product (Section~\ref{sec:expmimv}). This procedure  deletes all but the largest entries in the vector before performing a matrix-vector product.

We construct \gexpm and \gexpmq such that the solutions they produce have guaranteed accuracy, as proved in Section \ref{sec:anl}. On the other hand, \expmimv sacrifices predictable accuracy for a guaranteed fast runtime bound.

\subsection{Forming a Linear System}\label{sec:linsys}
We stated a coordinate relaxation method on a linear system. Thus, to use it, we require a linear system whose solution is an approximation of $\epec$.
Here we derive such a system using a Taylor polynomial for the matrix exponential. We present the construction for a general matrix $\mA \in \mathbb{R}^{n\times n}$ because the Taylor polynomial, linear system, and iterative updates are all well-defined for any real square matrix $\mA$; it is only the convergence results that require the additional assumption that $\mA$ is a graph-related matrix $\mP$ satisfying $\|\mP\|_1 \leq 1$.

Consider the product of the degree $N$ Taylor polynomial with $\ve_c$:
\[
T_N(\mA) \ve_c = \sum_{j=0}^N \tfrac{1}{j!}\mA^j \ve_c \approx \expof{\mA}\ve_c
\]
and denote the $j$th term of the sum by $\vv_j := \mA^j \ve_c /j!$. Then $\vv_0 = \ve_c$, and the later terms satisfy the recursive relation $\vv_{j+1} = \mA\vv_j /(j+1)$ for $j = 0, ... , N-1.$
This recurrence implies that the vectors $\vv_j$ satisfy the system
\begin{equation}\label{linsys}
\left[\begin{array}{ccccc}
  \mI & & & & \\
 -\mA/1 & \mI & & & \\
        & -\mA/2 & \ddots & & \\
      & & \ddots& \mI&  \\
      &    &    & -\mA/N & \mI \\
 \end{array}\right]
 \blvec{\vv} =
\bmat{\ve_c \\ 0 \\ \vdots \\ \vdots \\ 0} .
\end{equation}
If $\hat{\vv} = [\hat{\vv}_0, ... , \hat{\vv}_N]^T$ is an approximate solution to equation $\ref{linsys}$, then we have $\hat{\vv}_j \approx \vv_j$ for each term, and so $\sum_{j=0}^N \hat{\vv}_j \approx \sum_{j=0}^N \vv_j = T_N(\mA)\ve_c$. Hence, an approximate solution of this linear system yields an approximation of $\expof{\mA}\ve_c$. Because the end-goal is computing $\vx := \sum_{j=0}^N \hvv_j$, we need not form the blocks $\hvv_j$; instead, all updates that would be made to a block of $\hvv$ are instead made directly to $\vx$.

We denote the block matrix by $\mM$ for convenience; note that the explicit matrix can be expressed more compactly as $(\mI_{N+1} \kron \mI_n - \mS \kron \mA )$, where $\mS$ denotes the $(N+1) \times (N+1)$ matrix with first sub-diagonal equal to $[1/1, 1/2, ... , 1/N]$, and $\mI_k$ denotes the $k\times k $ identity matrix. Additionally, the right-hand side $[\ve_c, 0 , ... , 0]^T$ equals $\ve_1 \kron \ve_c$. When we apply an iterative method to this system, we often consider sections of the matrix $\mM = (\mI \kron \mI - \mS \kron \mA)$, solution $\hvv = [\hvv_0, ... , \hvv_N]^T$, and residual $\vr = [\vr_0, ... , \vr_N]^T$ partitioned into blocks. These vectors each consist of $N+1$ blocks of length $n$, while $\mM$ is an $(N+1)\times (N+1)$ block matrix, with blocks of size $n \times n$.

In practice, this large linear system is never formed, and we work with it implicitly. That is, when the algorithms \gexpm and \gexpmq apply coordinate relaxation to the linear system \eqref{linsys}, we will restate the iterative updates of each linear solver in terms of these blocks. We describe how this can be done efficiently for each algorithm below.

We summarize the notation introduced thus far that we will use throughout the rest of the discussion in Table~\ref{tab:notation}.

\begin{table}
\centering
\caption{Notation for adapting GS} \label{tab:notation}
\begin{tabularx}{\linewidth}{rl} \toprule
$\vx$  &  our approximation of $\epec$\\
$T_N(\mP)$  & the degree $N$ Taylor approximation to $\expof{\mP}$ \\
$\vv_j$ & term $j$ in the sum $\sum_{k=0}^N \mP^k \ve_c /k!$ \\
$\vv$ & the vector $[\vv_0, ... , \vv_N]^T$ \\
$\mM$ & the $(N+1)n \times (N+1)n$ matrix $\mI \kron \mI - \mS \kron \mP$ \\
$\mS$ & the $(N+1) \times (N+1)$ matrix with first subdiagonal $[1/1, ... , 1/N]$\\
$\hvxj{v}{k}$ & our GS approximate solution for $\mM\vv = \eoe{1}{c}$ at step $k$ \\
$\vxj{r}{k}$ & our GS residual for $\mM\vv = \eoe{1}{c}$ at step $k$ \\
$\hvxj{v}{k}_j$ & block $j$ in $\hvxj{v}{k} = [\hvxj{v}{k}_0, ... , \hvxj{v}{k}_N]^T$ \\
$(\hvxj{v}{k})_q$ & entry $q$ of the full vector $\hvxj{v}{k}$\\
IMV & an ``incomplete" matrix-vector product (Section~\ref{sec:expmimv})\\
 \bottomrule
\end{tabularx}
\end{table}

\subsection{Weighting the Residual Blocks}\label{sec:alg:psi}
Before presenting the algorithms, it is necessary to develop some understanding of the error introduced using the linear system in~\eqref{linsys} approximately. Our goal is to show that the error vector arising from using this system's solution to approximate $T_N(\mP)$ is a weighted sum of the residual blocks $\vr_j$. This is important here because then we can use the  coefficients of $\vr_j$ to determine the terminating criterion in the algorithms. To begin our error analysis, we look at the inverse of the matrix $\mM$.
\begin{lemma}\label{lem:Minverse}
Let $\mM = (\mI_{N+1} \kron \mI_n - \mS \kron \mA )$, where $\mS$ denotes the $(N+1) \times (N+1)$ matrix with first sub-diagonal equal to $[1/1, 1/2, ... , 1/N]$, and $\mI_k$ denotes the $k\times k $ identity matrix. Then
$\mM\inv = \sum_{k=0}^{N} \mS^k \kron \mA^k.$
\end{lemma}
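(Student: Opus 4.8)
The plan is to read $\mM$ as the identity minus a \emph{nilpotent} matrix and then write its inverse as a terminating Neumann series; notably, nothing about $\mA$ beyond being a square matrix is used. Concretely, $\mM = \mI - \mS\kron\mA$ where $\mI := \mI_{N+1}\kron\mI_n$, and the whole statement rests on two elementary facts. First, $\mS$ is an $(N+1)\times(N+1)$ matrix whose only nonzero entries sit on the first subdiagonal, so it is strictly lower triangular and hence nilpotent; more precisely, $\mS^{k}$ is supported on the $k$th subdiagonal, which is empty once $k \ge N+1$, so $\mS^{N+1} = 0$. Second, the Kronecker mixed-product identity gives $(\mS\kron\mA)(\mS^{k}\kron\mA^{k}) = \mS^{k+1}\kron\mA^{k+1}$, hence $(\mS\kron\mA)^k = \mS^k\kron\mA^k$ for every $k\ge 0$ by a one-line induction. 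Combining the two, $(\mS\kron\mA)^{N+1} = \mS^{N+1}\kron\mA^{N+1} = 0$, so $\mS\kron\mA$ is nilpotent and $\mI - \mS\kron\mA$ is automatically invertible.

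With that in place I would finish by a direct telescoping check rather than quoting the Neumann series abstractly. Setting $\mat{X} := \sum_{k=0}^{N}\mS^k\kron\mA^k$, the mixed-product identity gives
\[
\mM\,\mat{X} \;=\; \left(\mI - \mS\kron\mA\right)\sum_{k=0}^{N}\mS^k\kron\mA^k
\;=\; \sum_{k=0}^{N}\mS^k\kron\mA^k \;-\; \sum_{k=0}^{N}\mS^{k+1}\kron\mA^{k+1}
\;=\; \mI \;-\; \mS^{N+1}\kron\mA^{N+1} \;=\; \mI,
\]
the last step using $\mS^{N+1}=0$. The same computation with the factors in the opposite order (or simply the fact that a one-sided inverse of a square matrix is two-sided) gives $\mat{X}\,\mM = \mI$, so $\mM\inv = \mat{X}$.

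I do not expect a genuine obstacle here — stripped of the Kronecker notation this is just the finite geometric series $\sum_k q^k = (1-q)\inv$ for a nilpotent $q$. The only things that need care are bookkeeping: keeping the two tensor legs in the right order when applying the mixed-product rule, and pinning down the nilpotency index so that the truncation of the series is \emph{exact} (this exactness is what makes $\mM\inv$ a genuinely finite, structured sum, which is what the residual-weighting analysis in the next section will exploit). If one preferred to avoid Kronecker algebra altogether, an equivalent route is to verify $\mM\mat{X} = \mI$ block row by block row using the explicit form in \eqref{linsys}; that amounts to re-deriving the recurrence $\vv_{j+1} = \mA\vv_j/(j+1)$, but it is longer and less transparent than the telescoping argument above.
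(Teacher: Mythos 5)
Your proof is correct and follows essentially the same route as the paper's: nilpotency of $\mS$ (hence of $\mS\kron\mA$ via the mixed-product rule) followed by the telescoping finite geometric series. Note that your upper summation limit $N$ agrees with what the paper actually uses in its own proof (the lemma statement's $N-1$ is a typo, since $\mS^{N}\neq 0$); the only cosmetic difference is that you also address the left inverse explicitly, which the paper leaves implicit.
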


For a proof, see Appendix~\ref{sec:app}. Next we use the inverse of $\mM$ to define our error vector in terms of the residual blocks from the linear system in Section~\ref{sec:linsys}. In order to do so, we need to define a family of polynomials associated with the degree $N$ Taylor polynomial for $e^x$:
\begin{equation}\label{psi}
\psi_j(x) := \sum_{m=0}^{N-j} \frac{j!}{(j+m)!} x^m
\end{equation}
for $j = 0, 1, ... , N$. Note that these are merely slightly altered truncations of the well-studied functions $\phi_j(x) = \sum_{m=0}^{\infty} \tfrac{x^m}{(m+j)!}$ that arise in exponential integrators, a class of methods for solving initial value problems. These polynomials $\psi_j(x)$ enable us to derive a precise relationship between the error of the polynomial approximation and the residual blocks of the linear system $\mM \vv = \ve_1 \kron \ve_c$ as expressed in the following lemma.
\begin{lemma}\label{lem:linsyserror}
Consider an approximate solution $\hat{\vv} = [ \hvv_0; \hvv_1; \cdots; \hvv_N] $ to the linear system 
\[(\mI_{N+1} \kron \mI_n - \mS \kron \mA )[ \vv_0; \vv_1; \cdots; \vv_N] = \ve_1 \kron \ve_c.
\] 
Let $\vx = \sum_{j=0}^N \hat{\vv}_j$, let $T_N(x)$ be the degree $N$ Taylor polynomial for $e^x$, and define $\psi_j(x) = \sum_{m=0}^{N-j} \tfrac{j!}{(j+m)!} x^m$. Define the residual vector $\vr = [\vr_0 ; \vr_1 ; \ldots ; \vr_N]$ by $\vr: = \ve_1 \kron \ve_c - (\mI_{N+1} \kron \mI_n - \mS \kron \mA ) \hat{\vv}$. Then the error vector $T_N(\mA)\ve_c - \vx$ can be expressed
\[
T_N(\mA)\ve_c - \vx = \sum_{j=0}^N \psi_j(\mA)\vr_j.
\]
\end{lemma}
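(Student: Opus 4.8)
The plan is to express the approximation error in terms of the residual of the block linear system, then push the residual through $\mM^{-1}$ using Lemma~\ref{lem:Minverse}, and finally re-sum the resulting block expression to recognize the polynomials $\psi_j$.

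First I would recall the defining relationship between the exact and approximate solutions. Let $\vv = [\vv_0, \ldots, \vv_N]^T$ be the true solution of $\mM \vv = \ve_1 \kron \ve_c$ and $\hvv = [\hvv_0, \ldots, \hvv_N]^T$ the approximate one, with residual $\vr = \ve_1 \kron \ve_c - \mM \hvv$. Then $\mM(\vv - \hvv) = \vr$, so $\vv - \hvv = \mM^{-1}\vr$. Summing the blocks, $T_N(\mA)\ve_c - \hvx = \sum_{j=0}^N (\vv_j - \hvv_j) = \sum_{j=0}^N (\mM^{-1}\vr)_j$, where $(\cdot)_j$ denotes the $j$th block of length $n$. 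So everything reduces to computing the $j$th block of $\mM^{-1}\vr$ and summing over $j$.

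Next I would substitute the closed form from Lemma~\ref{lem:Minverse}, $\mM^{-1} = \sum_{k=0}^{N-1} \mS^k \kron \mA^k$ (I will check whether the upper limit should be $N$ or $N-1$; since $\mS$ is nilpotent with $\mS^{N+1}=0$ and the first subdiagonal has length $N$, $\mS^N$ may be nonzero, but for blocks of $\vr$ supported appropriately the extra term vanishes — in any case $\mM^{-1} = \sum_{k\ge 0} \mS^k \kron \mA^k$ as a finite sum). Applying this to $\vr = [\vr_0, \ldots, \vr_N]^T$ and reading off block $i$: since $\mS^k$ shifts down by $k$ and scales by the appropriate ratio of factorials, $(\mS^k)_{i,i-k} = \tfrac{(i-k)!}{i!}$ (product of the subdiagonal entries $1/(i-k+1)\cdots(1/i)$), we get $(\mM^{-1}\vr)_i = \sum_{k=0}^{i} \tfrac{(i-k)!}{i!} \mA^k \vr_{i-k}$. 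Now summing over $i$ and reindexing with $j = i-k$, $m = k$: $\sum_{i=0}^N (\mM^{-1}\vr)_i = \sum_{j=0}^N \sum_{m=0}^{N-j} \tfrac{j!}{(j+m)!} \mA^m \vr_j = \sum_{j=0}^N \psi_j(\mA)\vr_j$, exactly matching the definition~\eqref{psi}. That yields the claim.

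The main obstacle is bookkeeping the Kronecker-product index arithmetic: correctly identifying the entries of $\mS^k$ (a product of consecutive reciprocals telescoping to $\tfrac{(i-k)!}{i!}$), handling the nilpotency so that the finite sum from Lemma~\ref{lem:Minverse} is exactly what is needed, and then swapping the order of the double summation and reindexing so the inner sum's upper limit becomes $N-j$ rather than something $i$-dependent. None of this is deep, but getting the factorial ratios and summation limits to line up precisely with the stated $\psi_j$ is where care is required; I would double-check the edge cases $j=0$ (where $\psi_0 = T_N$) and $j=N$ (where $\psi_N \equiv 1$) as sanity checks.
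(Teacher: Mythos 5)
Your proposal is correct and follows essentially the same route as the paper's proof: write $\vv - \hvv = \mM^{-1}\vr$, substitute the Kronecker-sum form of $\mM^{-1}$ from Lemma~\ref{lem:Minverse}, track the factorial ratios produced by the powers of $\mS$, and reindex the double sum to recover $\psi_j(\mA)$. The only cosmetic difference is that you read off rows of $\mS^k$ (blocks of $\mM^{-1}\vr$) where the paper reads off columns (the coefficient multiplying each $\vr_{j-1}$), and your observation about the $N$ versus $N-1$ upper limit correctly identifies a harmless typo in the paper's statement of Lemma~\ref{lem:Minverse}.
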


See Appendix~\ref{sec:app} for a proof. The essence of the proof is that, using Lemma~\ref{lem:Minverse}, we can write a simple formulation for $\mM^{-1} \vr$, which is the expression for the error.

\subsection{Approximating the Taylor Polynomial via Gauss-Southwell}\label{sec:gexpm}
The main idea of \gexpm, our first algorithm, is to apply Gauss-Southwell to the system \eqref{linsys} in a way that exploits the sparsity of both $\mM$ and the input matrix $\mP$. In particular, we need to adapt the coordinate and residual updates of Gauss-Southwell in~\eqref{eqn:gs} for the system \eqref{linsys} by taking advantage of the block structure of the system.

We begin our iteration to solve  $\mM \vv = \eoe{1}{c}$ with $\hvxj{v}{0} = 0$ and $\vxj{r}{0} = \eoe{1}{c}$. Consider an approximate solution after $k$ steps of Gauss-Southwell,  $\hvxj{v}{k}$, and residual $\vxj{r}{k}$. The standard GS iteration consists of adding the largest entry of $\vxj{r}{k}$, call it $m^{(k)} := \vxj{r}{k}_q$, to $\hvxj{v}{k}_q$, and then updating $\vxj{r}{k+1} = \vxj{r}{k} - m^{(k)} \mM \ve_q$.

We want to rephrase the iteration using the block structure of our system. We will denote the $j$th block of $\vr$ by $\vr_{j-1}$, and entry $q$ of $\vr$ by $(\vr)_q$. Note that the entry $q$ corresponds with node $i$ in block $j-1$ of the residual, $\vr_{j-1}$. Thus, if the largest entry is $(\vxj{r}{k})_q$, then we write $\ve_q = \eoe{j}{i}$ and the largest entry in the residual is $m^{(k)}:= (\eoe{j}{i})^T \vxj{r}{k} = \ve_i^T \vr_{j-1}^{(k)}$.
The standard GS update to the solution would then add $m^{(k)}(\eoe{j}{i})$ to the iterative solution, $\hvxj{v}{k}$; but this simplifies to adding $m^{(k)}\ve_i$ to block $j-1$ of $\hvxj{v}{k}$, i.e. $\hvxj{v}{k}_{j-1}$. In practice we never form the blocks of $\hvv$, and instead simply add $m^{(k)}\ve_i$ to $\vxj{x}{k}$, our iterative approximation of $\epec$.

The standard update to the residual is $\vxj{r}{k+1} = \vxj{r}{k} - m^{(k)} \mM \ve_q$. Using the block notation and expanding $\mM = \mI \kron \mI - \mS \kron \mP$, the residual update becomes $\vxj{r}{k+1} = \vxj{r}{k} - m^{(k)}\eoe{j}{i} + (\mS\ve_j)\kron(\mP\ve_i).$ Furthermore, we can simplify the product $\mS\ve_j$ using the structure of $\mS$: for $j= 1, ..., N$, we have $\mS\ve_j = \ve_{j+1}/j$; if $j = N+1$, then $\mS\ve_j = 0$.

To implement this iteration, we needed $q$, the index of the largest entry of the residual vector. To ensure this operation is fast, we store the residual vector's non-zero entries in a heap. This allows $O(1)$ lookup time for the largest magnitude entry each step at the cost of reheaping the residual each time an entry of $\vr$ is altered.

We want the algorithm to terminate once its 1-norm error is below a prescribed tolerance, $\eps$. To ensure this, we maintain a weighted sum of the 1-norms of the residual blocks, $t^{(k)} = \sum_{j=0}^N \psi_j(1)\|\vxj{r}{k}_j\|_1$. Now we can reduce the entire \gexpm iteration to the following:
\begin{enumerate}[noitemsep]
\item Set $m^{(k)} = (\eoe{j}{i})^T\vxj{r}{k}$, the top entry of the heap, then delete the entry in $\vxj{r}{k}$ so that $(\eoe{j}{i})^T\vxj{r}{k+1} = 0$.
\item Update $\vxj{x}{k+1} = \vxj{x}{k} + m^{(k)}\ve_i$. 
\item If $j < N+1$, update $\vxj{r}{k+1}_{j} = \vxj{r}{k}_{j} + m^{(k)}\mP\ve_i/j$, reheaping $\vr$ after each add.
\item Update $t^{(k+1)} = t^{(k)} - \psi_{j-1}(1)|m^{(k)}| + \psi_j(1)|m^{(k)}|/j$.
\end{enumerate}

We show in Theorem~\ref{thm:conv:gexpm} that iterating until $t^{(k)} \leq \eps$ guarantees a 1-norm accuracy of $\eps$. We discuss the complexity of the algorithms in Section~\ref{sec:anl}.

\subsection{Approximating the Taylor Polynomial via Gauss-Seidel}\label{sec:gexpmq}
Next we describe a similar algorithm that stores the residual in a queue to avoid the heap updates. Our original inspiration for this method was the relationship between the Bookmark Coloring Algorithm~\cite{berkhin2007-bookmark} and the Push method for Personalized PageRank~\cite{andersen2006-local}. The rationale for this change is that maintaining the heap in \gexpm is slow. Remarkably, the final algorithm we create is actually a version of Gauss-Seidel that \emph{skips} updates from insignificant residuals, whereas standard  Gauss-Seidel cycles through coordinates of the matrix cyclically in index order. Our algorithm will use the queue to do \emph{one} such pass and maintain \emph{significant entries} of the residual that must be relaxed (and not skipped). 

The basic iterative step is the same as in \gexpm,  except that the entry of the residual chosen, say $(\eoe{j}{i})^T\vr$, is not selected to be the largest in $\vr$. Instead, it is the next entry in a queue storing significant entries of the residual. Then as entries in $\vr$ are updated, we place them at the back of the queue, $Q$. Note that the block-wise nature of our update has the following property: an update from the $j$th block results in residuals changing in the $(j+1)$st block. Because new elements are added to the tail of the queue, all entries of $\vr_{j-1}$ are relaxed before proceeding to $\vr_{j}$.

If carried out exactly as described, this would be equivalent to performing each product $\vv_j = \mP \vv_{j-1}/j$ in its entirety. But we want to avoid these full products; so we introduce a rounding threshold for determining whether or not to operate on the entries of the residual as we pop them off of $Q$.

The rounding threshold is determined as follows. After every entry in $\vr_{j-1}$ is removed from the top of $Q$, then all entries remaining in $Q$ are in block $\vr_{j}$ (remember, this is because operating on entries in $\vr_{j-1}$ adds to $Q$ only entries that are from $\vr_{j}$.) Once every entry in $\vr_{j-1}$ is removed from $Q$, we set $Z_{j} = |Q|$, the number of entries in $Q$; this is equivalent to the total number of non-zero entries in $\vr_{j}$ before we begin operating on entries of $\vr_{j}$. Then, while operating on $\vr_{j}$, the threshold used is
\begin{equation}\label{eqn:threshold}
\textrm{threshold}(\eps,j,N) =
\tfrac{\eps}{N\psi_j(1)Z_j}.
\end{equation}
Then, each step, an entry is popped off of $Q$, and if it is larger than this threshold, it is operated on; otherwise, it is simply discarded, and the next entry of $Q$ is considered.
Once again, we maintain a weighted sum of the 1-norms of the residual blocks, $t^{(k)} = \sum_{j=0}^N \psi_j(1)\|\vxj{r}{k}_j\|_1$, and terminate once $t^{(k)} \leq \eps$, or if the queue is empty.

Step $k+1$ of \gexpmq is as follows:
\begin{enumerate}[noitemsep]
\item Pop the top entry of $Q$, call it $r = (\eoe{j}{i})^T\vxj{r}{k}$, then delete the entry in  $\vxj{r}{k}$, so that $(\eoe{j}{i})^T\vr^{(k+1)}=0$.
\item If $r \geq \text{threshold}(\eps,j,N)$ do the following:
	\begin{enumerate}[noitemsep,nolistsep]
	\item Add $r \ve_i$ to $\vx_i$.
	\item Add $r \mP\ve_i/j$ to residual block $\vxj{r}{k+1}_{j}$.
	\item For each entry of $\vxj{r}{k+1}_{j}$ that was updated, add that entry to the back of $Q$.
	\item Update $t^{(k+1)} = t^{(k)} - \psi_{j-1}(1)|r| + \psi_j(1)|r|/j$.
	\end{enumerate}
\end{enumerate}
We also provide a working python pseudocode for this method in Figure~\ref{alg:gexpmq}. 

We show in the proof of Theorem~\ref{thm:conv:gexpmq} that iterating until $t^{(k)} \leq \eps$, or until all entries in the queue satisfying the threshold condition have been removed, will guarantee that the resulting vector $\vx$ will approximate $\epec$ with the desired accuracy. 

\begin{figure}
\setlength{\columnsep}{5pt}
\begin{multicols}{2}
\begin{lstlisting}[language=python,commentstyle={\color{blue}\itshape},xleftmargin=0pt]
## Estimate column c of 
## the matrix exponential vector 
# G is the graph as a dictionary-of-sets, 
# eps is set to stopping tolerance
def compute_psis(N):
  psis = {}
  psis[N] = 1.
  for i in xrange(N-1,-1,-1):
    psis[i] = psis[i+1]/(float(i+1.))+1.
  return psis    
def compute_threshs(eps, N, psis):
  threshs = {}
  threshs[0] = (math.exp(1)*eps/float(N))/psis[0]
  for j in xrange(1, N+1):
    threshs[j] = threshs[j-1]*psis[j-1]/psis[j]
  return threshs
## Setup parameters and constants
N = 6  
c = 1 # the column to compute
psis = compute_psis(N)
threshs = compute_threshs(eps,N,psis)
## Initialize variables
x = {} # Store x, r as dictionaries
r = {} # initialize residual
Q = collections.deque() # initialize queue
sumresid = 0.    
r[(c,0)] = 1.
Q.append(c)
sumresid += psis[0]


## Main loop
for j in xrange(0, N):
  qsize = len(Q)
  relaxtol = threshs[j]/float(qsize)
  for qi in xrange(0, qsize):
    i = Q.popleft()
    rij = r[(i,j)]
    if rij < relaxtol:
      continue
    # perform the relax step
    if i not in x: x[i] = 0.
    x[i] += rij
    r[(i,j)] = 0.
    sumresid -= rij*psis[j]
    update = (rij/(float(j)+1.))/len(G[i])
    for u in G[i]:   # for neighbors of i
      next = (u, j+1)
      if j == N-1: 
          if u not in x: x[u] = 0.
          x[u] += update
      else:
          if next not in r: 
            r[next] = 0.
            Q.append(u)
          r[next] += update
          sumresid += update*psis[j+1]
    # after all neighbors u
    if sumresid < eps: break
  if len(Q) == 0: break
  if sumresid < eps: break
\end{lstlisting}        
\end{multicols}
\caption{A working python code to implement the \gexpmq algorithm with a queue. A full demo is available from \url{https://gist.github.com/dgleich/10224374}. We implicitly normalize the graph structure into a stochastic matrix by dividing by the degree in the computation of \texttt{update}.}
\label{alg:gexpmq}
\end{figure}

\subsection{A sparse, heuristic approximation}\label{sec:expmimv}
The above algorithms guarantee that the final approximation attains the desired accuracy $\eps$. Here we present an algorithm designed to be faster. 
Because we have no error analysis for this algorithm currently, and because the steps of the method are well-defined for any $\mA \in \mathbb{R}^{n \times n}$, we discuss this algorithm in a more general setting. This method also uses a Taylor polynomial for $\expof{\mA}$, but does not use the linear system constructed for the previous two methods. Instead, the Taylor terms are computed via Horner's rule on the Taylor polynomial. But, rather than a full matrix-vector product, we apply what we call an``incomplete'' matrix-vector product (IMV) to compute the successive terms. Thus, our name: \expmimv. We describe the IMV procedure before describing the algorithm.

\paragraph{Incomplete Matrix-vector Products (IMV)}
Given any matrix $\mA$ and a vector $\vv$ of compatible dimension, the IMV procedure sorts the entries of $\vv$, then removes all entries except for the largest $z$. Let $[\vv]_z$ denote the vector $\vv$ with all but its $z$ largest-magnitude entries deleted. Then we define the \textit{$z$-incomplete matrix-vector product} of $\mA$ and $\vv$ to be $\mA [\vv]_z$. We call this an \textit{incomplete} product, rather than a rounded matrix-vector product, because, although the procedure is equivalent to rounding to 0 all entries in $\vv$ below some threshold, that rounding-threshold is not known a priori, and its value will vary from step to step in our algorithm.

There are likely to be a variety of ways to implement these IMVs. Ours computes $[\vv]_z$ by filtering all the entries of $\vv$ through a min-heap of size $z$. For each entry of $\vv$, if that entry is larger than the minimum value in the heap, then replace the old minimum value with the new entry and re-heap; otherwise, set that entry in $[\vv]_z$ to be zero, then proceed to the next entry of $\vv$. Many similar methods have been explored in the literature before, for instance~\cite{yuan-2011-truncated}.

\paragraph{Horner's rule with IMV}
A Horner's rule approach to computing $T_N(\mA)$ considers the polynomial as follows:
\begin{equation}
\begin{aligned}
\expof{\mA} \approx T_N(\mA) & = \mI + \tfrac{1}{1!}\mA^1 + \tfrac{1}{2!}\mA^2 + \cdots + \tfrac{1}{N!}\mA^N \\
&= \mI + \tfrac{1}{1}\mA\left(\mI + \tfrac{1}{2}\mA\left(\mI + \cdots + \tfrac{1}{N-1}\mA\left(\mI + \tfrac{1}{N}\mA\right) \cdots \right)\right)\\
\end{aligned}
\end{equation}
Using this representation, we can approximate $\expof{\mA}\ve_c$ by multiplying $\ve_c$ by the inner-most term, $\mA/N$, and working from the inside out. More precisely, the \expmimv procedure is as follows:
\begin{enumerate}[noitemsep]
\item Fix $z \in \mathbb{N}$.
\item Set $\vxj{x}{0} = \ve_c$.
\item For $k = 0, ... , N-1$ compute $\displaystyle \vxj{x}{k+1} = \mA \left([\vxj{x}{k}]_z/(N-k)\right) + \ve_c$. 
\end{enumerate}
Then at the end of this process we have $\vxj{x}{N} \approx T_N(\mA)\ve_c$. The vector $[\vxj{x}{k}]_z$ used in each iteration of step 3 is computed via the IMV procedure described above. For an experimental analysis of the speed and accuracy of \expmimv, see Section~\ref{sec:exp:expmimv}. 

\paragraph{Runtime analysis}
Now assume that the matrix $\mA$ in the above presentation corresponds to a graph, and let $d$ be the maximum degree found in the graph related to $\mA$.
Each step of \expmimv requires identifying the $z$ largest entries of $\vxj{v}{k}$, multiplying $\mA[\vxj{v}{k}]_z$, then adding $\ve_c$. If $\vv$ has $\nnz{\vv}$ non-zeros, and the largest $z$ entries are desired, then computing $[\vv]_z$ requires at most $O(\nnz{\vv} \log(z))$  work: each of the $\nnz{\vv}$ entries are put into the size-$z$ heap, and each heap update takes at most $O(\log(z))$ operations.

Note that the number of non-zeros in $\vxj{v}{k}$, for any $k$, can be no more than $dz$. This is because the product
$\vxj{v}{k} = \mA[\vxj{v}{k-1}]_z$
combines exactly $z$ columns of the matrix: the $z$ columns corresponding to the $z$ non-zeros in $[\vxj{v}{k}]_z$. Since no column of $\mA$ has more than $d$ non-zeros, the sum of these $z$ columns can have no more than $dz$ non-zeros.
Hence, computing $[\vxj{v}{k}]_z$ from $\vxj{v}{k}$ requires at most $O(dz\log(z))$ work. Observe also that the work done in computing the product $\mA[\vxj{v}{k}]_z$ cannot exceed $dz$.
Since exactly $N$ iterations suffice to evaluate the polynomial, we have proved Theorem~\ref{thm:expmimv}:
\begin{theorem}\label{thm:expmimv}
Let $\mA$ be any graph-related matrix having maximum degree $d$. Then the \expmimv procedure, using a heap of size $z$, computes an approximation of $\expof{\mA}\ve_c$ via an $N$ degree Taylor polynomial in work bounded by $O(Ndz\log z)$.
\end{theorem}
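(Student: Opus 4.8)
The plan is to account for the cost of each of the $N$ iterations of the \expmimv procedure and then multiply. The proof is essentially a bookkeeping argument, so I would organize it around two structural facts about the iteration $\vxj{x}{k+1} = \mA([\vxj{x}{k}]_z/(N-k)) + \ve_c$: first, that $[\vxj{x}{k}]_z$ has at most $z$ nonzeros by construction; second, that $\vxj{x}{k}$ itself has at most $dz+1$ nonzeros, since it is a linear combination of the $z$ columns of $\mA$ indexed by the support of $[\vxj{x}{k-1}]_z$ (each contributing at most $d$ nonzeros) plus the single extra entry from adding $\ve_c$. The additive $\ve_c$ and the $1/(N-k)$ scaling do not change the asymptotics, so I would note this and then absorb the $+1$ into the $O(\cdot)$.

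First I would bound the cost of forming $[\vxj{x}{k}]_z$ from $\vxj{x}{k}$. Using the min-heap implementation described just above the theorem statement, each of the $\nnz{\vxj{x}{k}}$ entries is compared against the heap minimum and, if it is inserted, triggers a re-heap costing $O(\log z)$; hence this step costs $O(\nnz{\vxj{x}{k}}\log z) = O(dz\log z)$ by the nonzero bound. Next I would bound the cost of the sparse product $\mA[\vxj{x}{k}]_z$: this touches $z$ columns of $\mA$, each with at most $d$ nonzeros, so it is $O(dz)$ work, which is dominated by the $O(dz\log z)$ from the previous step. Adding $\ve_c$ is $O(1)$. So each iteration costs $O(dz\log z)$.

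Finally I would multiply by the number of iterations. The Horner evaluation of $T_N(\mA)\ve_c$ requires exactly $N$ passes through step 3, so the total work is $N \cdot O(dz\log z) = O(Ndz\log z)$, which is the claimed bound. I would close by remarking that the base case $\vxj{x}{0} = \ve_c$ has a single nonzero, so the nonzero bound $dz+1$ holds at every step by the inductive argument above.

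There is no real obstacle here; the only point requiring a moment's care is justifying the $O(dz)$ nonzero bound on $\vxj{x}{k}$ uniformly in $k$ — one must observe that even though the true (dense) Taylor terms would fill in, the \emph{incomplete} product guarantees we only ever multiply $\mA$ against a vector with at most $z$ nonzeros, so the bound is maintained inductively across all $N$ steps rather than degrading. Everything else is a direct substitution of the nonzero counts into the per-operation costs already spelled out in the text preceding the theorem.
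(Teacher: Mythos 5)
Your proposal is correct and follows essentially the same argument as the paper: bound $\nnz{\vxj{x}{k}}$ by $O(dz)$ because the incomplete product combines only $z$ columns each with at most $d$ nonzeros, charge $O(\nnz{\vxj{x}{k}}\log z)$ for the heap-based truncation and $O(dz)$ for the sparse product, and multiply by the $N$ Horner steps. Your explicit handling of the $+\ve_c$ term (the $+1$ in the nonzero count) is a minor refinement the paper elides, but it changes nothing asymptotically.
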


If $\mA$ satisfies $\|\mA\|_1\leq 1$, then by Lemma~\ref{lem:Nbound} we can choose $N$ to be a small constant to achieve a coarse $O(10^{-3})$ approximation. 

While the \expmimv method always has a sublinear runtime, we currently have no theoretical analysis of its accuracy. However, in our experiments we found that a heap size of $z = 10,000$ yields a 1-norm accuracy of $\approx 10^{-3}$ for social networks with millions of nodes (Section~\ref{sec:exp:expmimv}). Yet, even for a fixed value of $z$, the accuracy varied widely. For general-purpose computation of the matrix exponential, we do not recommend this procedure.
If instead the purpose is identifying large entries of $\epec$, our experiments suggest that \expmimv often accomplishes this task with high accuracy (Section~\ref{sec:exp:expmimv}). 

\section{Analysis}\label{sec:anl}

We divide our theoretical analysis into two stages. In the first we establish the convergence of the coordinate relaxation methods, \gexpm and \gexpmq, for a class of matrices that includes column-stochastic matrices. Then, in Section~\ref{sec:anl:pow}, we give improved results when the underlying graph has a degree distribution that follows a power-law, which we define formally in Section~\ref{sec:anl:pow}.

\subsection{Convergence of Coordinate Relaxation Methods}
In this section, we show that both \gexpm and \gexpmq converge to an approximate solution with a prescribed 1-norm error $\eps$ for any matrix $\mP$ satisfying $\|\mP\|_1 \leq 1$. 

Consider the large linear system \eqref{linsys} using a matrix $\mP$ with 1-norm bounded by one. Then by applying both Lemma~\ref{lem:linsyserror} and the triangle inequality, we find that the error in approximately solving the system can be expressed in terms of the residuals in each block: 
\[ \|T_N(\mP)\ve_c - \vx\|_1 \leq \sum_{j=0}^N \| \psi_j(\mP) \|_1 \|\vr_j\|_1. \]
Because the polynomials $\psi_j(t)$ have all nonnegative coefficients, and because $\psi_j(\mP)$ is a polynomial in $\mP$ for each $j$, we have that $\|\psi_j(\mP) \|_1 \leq \psi_j(\|\mP\|_1 )$. Finally, using the condition that $\|\mP\|_1 \leq 1$, we have proved the following:
\begin{lemma}\label{lem:errorpsi}
Consider the setting from Lemma~\ref{lem:linsyserror} applied to a matrix $\|\mP\|_1 \leq 1$. Then the norm of the error vector $T_N(\mA)\ve_c - \vx$ associated with an approximate solution is a weighted sum of the residual norms from each block:
\[
\|T_N(\mP)\ve_c - \vx\|_1 \leq \sum_{j=0}^N \psi_j(1)\|\vr_j\|_1 .
\]
\end{lemma}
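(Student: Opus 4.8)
The plan is to start from the exact error identity furnished by Lemma~\ref{lem:linsyserror}, specialized to $\mA = \mP$, namely $T_N(\mP)\ve_c - \hat{\vx} = \sum_{j=0}^N \psi_j(\mP)\vr_j$, and then bound the $1$-norm of the right-hand side term by term. First I would apply the triangle inequality to distribute the norm over the $N+1$ summands, and then use submultiplicativity of the induced matrix $1$-norm to get $\|\psi_j(\mP)\vr_j\|_1 \leq \|\psi_j(\mP)\|_1 \|\vr_j\|_1$ for each $j$. This already yields the intermediate estimate $\|T_N(\mP)\ve_c - \hat{\vx}\|_1 \leq \sum_{j=0}^N \|\psi_j(\mP)\|_1 \|\vr_j\|_1$, which is exactly the displayed inequality stated in the paragraph preceding the lemma.

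The substantive step is then to replace $\|\psi_j(\mP)\|_1$ by $\psi_j(1)$. Since $\psi_j(x) = \sum_{m=0}^{N-j} \tfrac{j!}{(j+m)!} x^m$ has only nonnegative coefficients, I would expand $\psi_j(\mP) = \sum_m \tfrac{j!}{(j+m)!}\mP^m$, apply the triangle inequality across these $N-j+1$ terms, and invoke submultiplicativity once more in the form $\|\mP^m\|_1 \leq \|\mP\|_1^m$. This gives $\|\psi_j(\mP)\|_1 \leq \sum_m \tfrac{j!}{(j+m)!}\|\mP\|_1^m = \psi_j(\|\mP\|_1)$. Finally, because $\psi_j$ is a polynomial with nonnegative coefficients it is nondecreasing on $[0,\infty)$, so the hypothesis $\|\mP\|_1 \leq 1$ forces $\psi_j(\|\mP\|_1) \leq \psi_j(1)$. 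Chaining these three inequalities over all $j$ produces the claimed bound.

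I do not expect a real obstacle here: the argument is entirely triangle inequality plus submultiplicativity of an operator norm. The only point that genuinely needs to be flagged is the positivity of every coefficient $j!/(j+m)!$, since it is what legitimizes both pulling the norm inside the polynomial and the monotonicity comparison $\psi_j(\|\mP\|_1)\leq\psi_j(1)$; with coefficients of mixed sign neither step would be valid. Everything else is routine, which is presumably why the authors fold the proof into the text immediately before the statement rather than displaying it separately.
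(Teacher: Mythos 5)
Your argument is correct and is exactly the paper's own proof: Lemma~\ref{lem:linsyserror} plus the triangle inequality gives $\sum_{j=0}^N \|\psi_j(\mP)\|_1\|\vr_j\|_1$, and the nonnegativity of the coefficients of $\psi_j$ together with $\|\mP\|_1\leq 1$ gives $\|\psi_j(\mP)\|_1 \leq \psi_j(\|\mP\|_1) \leq \psi_j(1)$. You merely spell out the coefficient-positivity step that the paper leaves implicit.
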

Note that this does not require nonnegativity of either $\vr$ or $\mP$, only that $\| \mP \|_1 \leq 1$; this improves on our original analysis in~\cite{Kloster-2013-nexpokit}. 

We now show that our algorithms monotonically decrease the weighted sum of residual norms, and hence converge to a solution. The analysis differs between the two algorithms (Theorem~\ref{thm:conv:gexpm} for \gexpm and Theorem~\ref{thm:conv:gexpmq} for \gexpmq), but the intuition remains the same: each relaxation step reduces the residual in block $j$ and increases the residual in block $j+1$, but by a smaller amount. Thus, the relaxation steps monotonically reduce the residuals. 


\begin{theorem}\label{thm:conv:gexpm}
Let $\mP \in \mathbb{R}^{n \times n}$ satisfy $\|\mP \|_1 \leq 1$. Then in the notation of Section~\ref{sec:gexpm}, the residual vector after $l$ steps of \gexpm satisfies
$\| \vr^{(l)} \|_1 \leq l^{- 1/(2d)}$
and the error vector satisfies
\begin{equation}\label{eqn:thm:gexpmconv}
\| T_N(\mP)\ve_c - \vx \|_1 \leq \exp(1) \cdot l^{\lrp{ - \tfrac{1}{2d} }},
\end{equation}
so \gexpm converges in at most $l = (\exp(1)/\eps)^{2d}$ iterations.
\end{theorem}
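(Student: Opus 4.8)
The plan is to track the quantity $\|\vr^{(l)}\|_1$ directly, showing it decreases at a controlled rate, and then convert that into the error bound via Lemma~\ref{lem:errorpsi}. The starting point is the observation that each Gauss--Southwell step picks the largest-magnitude entry $m^{(k)}$ of the residual and, per the block update in Section~\ref{sec:gexpm}, removes $m^{(k)}$ from block $j-1$ and adds $m^{(k)}\mP\ve_i/j$ to block $j$. Since $\|\mP\ve_i\|_1 \le \|\mP\|_1 \le 1$ and $j \ge 1$, the 1-norm of the residual changes by at most $-|m^{(k)}| + |m^{(k)}|/j \le 0$; in particular it \emph{never increases}, and it decreases by exactly $|m^{(k)}|(1 - 1/j)$. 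The tricky case is $j=1$ (the first block), where the step removes mass but adds back an equal amount, so no progress is made from that term alone — this is the main obstacle and I address it below.

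The key step is a lower bound on $|m^{(k)}|$ in terms of $\|\vr^{(k)}\|_1$. Because $m^{(k)}$ is the largest entry, $|m^{(k)}| \ge \|\vr^{(k)}\|_1 / \nnz{\vr^{(k)}}$, and I need to bound the number of nonzeros in the residual. At step $k$, the residual has been touched at most $k$ times, and each touch spreads mass to at most $d$ new coordinates (one column of $\mP$, which has at most $d$ nonzeros), in the next block. A crude but sufficient bound is $\nnz{\vr^{(k)}} \le $ something polynomial in $k$ and $d$; combined with the per-step decrease this yields a recursion of the form $\|\vr^{(k+1)}\|_1 \le \|\vr^{(k)}\|_1 - c\,\|\vr^{(k)}\|_1^{2} / (\text{poly})$, i.e. a differential-inequality argument giving $\|\vr^{(l)}\|_1 = O(l^{-1/(2d)})$. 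The exponent $1/(2d)$ rather than $1/2$ must come from the block structure: the residual mass can migrate through all $N$ blocks, and the geometric-type spreading across $d$-fan-outs over the block index is what degrades the rate — I would make the counting of reachable coordinates after $k$ steps precise enough to extract exactly the $2d$ in the exponent (this is the calculation I would need to do carefully, and I expect it mirrors the argument in the prior work~\cite{Kloster-2013-nexpokit}).

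Handling the $j=1$ obstacle: relaxing an entry in block $0$ (which only ever contains the initial $\ve_c$ and nothing added later, since block $0$ receives no updates — updates from block $j$ land in block $j+1$, and block $0$ is never a target) is a one-time event. So after the first step the residual lives entirely in blocks $1,\dots,N$, and for those the factor $(1-1/j)$ is at worst $1/2$ (when $j=2$... wait, when $j=1$ in the \emph{second} block index convention). I would simply note that the worst-case decrease factor across all relevant blocks is bounded below by a constant, absorb it into the constant $c$, and proceed.

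Finally, I would assemble the pieces: from $\|\vr^{(l)}\|_1 \le l^{-1/(2d)}$, Lemma~\ref{lem:errorpsi} gives
\[
\|T_N(\mP)\ve_c - \vx\|_1 \le \sum_{j=0}^N \psi_j(1)\|\vr_j\|_1 \le \Big(\max_j \psi_j(1)\Big)\|\vr^{(l)}\|_1 \le \exp(1)\, l^{-1/(2d)},
\]
using that $\psi_j(1) \le \psi_0(1) = T_N(1) \le e$ (the $\psi_j(1)$ are partial tails of the exponential series with $j! $ normalization, each bounded by $e$). Setting the right-hand side equal to $\eps$ and solving for $l$ gives $l = (\exp(1)/\eps)^{2d}$, completing the proof. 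The one genuinely delicate point is the nonzero-count bound feeding the differential inequality; everything else is bookkeeping with the block updates and the elementary bound $\psi_j(1)\le e$.
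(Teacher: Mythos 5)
Your overall strategy matches the paper's: bound the per-step decrease of $\|\vr^{(k)}\|_1$ using the worst-case factor $(1-1/j)\ge 1/2$ for $j\ge 2$, lower-bound the relaxed entry $|m^{(k)}|$ by the average residual value, and finish with Lemma~\ref{lem:errorpsi} and $\psi_j(1)\le\psi_0(1)\le e$. The handling of the first block (a one-time event, after which every step has $j\ge 2$) and the final assembly of the error bound are both correct and essentially identical to the paper's argument.

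However, there is a genuine gap at the central quantitative step, and what you sketch in its place is wrong in form. The nonzero count is not merely ``something polynomial in $k$ and $d$'': the precise bound $\nnz{\vr^{(k)}}\le dk$ (each relaxation adds at most $d$ new nonzeros, one column of $\mP$) is what you need, and it yields $|m^{(k)}|\ge\|\vr^{(k)}\|_1/(dk)$ and hence the \emph{multiplicative} recursion
\[
\|\vr^{(k+1)}\|_1\le\|\vr^{(k)}\|_1\Bigl(1-\tfrac{1}{2dk}\Bigr),
\]
whose telescoping product is controlled by $1-x\le e^{-x}$ and the harmonic-sum bound $\sum_{k=1}^l 1/k\ge\log l$, giving $\|\vr^{(l)}\|_1\le l^{-1/(2d)}$. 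The recursion you propose instead, $\|\vr^{(k+1)}\|_1\le\|\vr^{(k)}\|_1-c\,\|\vr^{(k)}\|_1^{2}/\mathrm{poly}$, is quadratic in the residual norm; nothing in the average-value argument produces a square (the decrease is $|m^{(k)}|/2\ge\|\vr^{(k)}\|_1/(2dk)$, linear in $\|\vr^{(k)}\|_1$), and a quadratic differential inequality would give $O(1/l)$-type decay, not $l^{-1/(2d)}$. Relatedly, your diagnosis of where the exponent $1/(2d)$ comes from is off: it has nothing to do with mass migrating through the $N$ blocks or geometric fan-out over the block index ($N$ does not appear in the bound at all); the $d$ comes from the linear growth $\nnz{\vr^{(k)}}\le dk$ of the support and the $2$ from the factor $(1-1/j)\ge 1/2$. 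Since you explicitly defer this calculation, the proof as written does not establish the claimed rate.
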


\begin{proof}
The iterative update described in Section~\ref{sec:gexpm} involves a residual block, say $\vr_{j-1}$, and a row index, say $i$, so that the largest entry in the residual at step $l$ is $m^{(l)} = (\eoe{j}{i})^T\vxj{r}{l}$. First, the residual is updated by deleting the value $m^{(l)}$ from entry $i$ of the block $\vxj{r}{l}_{j-1}$, which results in the 1-norm of the residual decreasing by exactly $|m^{(l)}|$. Then, we
add $m^{(l)} \mP \ve_i / j$ to $\vxj{r}{l}_j$, which results in the 1-norm of the residual increasing by at most $\|m^{(l)} \mP \ve_i /j\|_1 \leq |m^{(l)}/j|$, since $\|\mP\ve_i\|_1 \leq 1$. Thus, the net change in the 1-norm of the residual will satisfy
\[
\|\vxj{r}{l+1}\|_1 \leq \|\vxj{r}{l}\| - |m^{(l)}| + \left|\tfrac{m^{(l)}}{j}\right|.
\]

Note that the first residual block, $\vr_0$, has only a single non-zero in it, since $\vr_0 = \ve_c$ in the initial residual. This means that every step after the first operates on residual $\vr_{j-1}$ for $j \geq 2$. Thus, for every step after step 0, we have that $1/j \leq 1/2$. Hence, we have
\[
\|\vxj{r}{l+1}\|_1 \leq \|\vxj{r}{l}\| - |m^{(l)}| + \left|\tfrac{m^{(l)}}{2}\right| = \|\vxj{r}{l}\| -  \tfrac{|m^{(l)}|}{2}.
\]
We can lowerbound $|m^{(l)}|$, the largest-magnitude entry in the residual, with the average magnitude of the residual. The average value of $\vr$ equals $\|\vr\|_1$ divided by the number of non-zeros in $\vr$.
After $l$ steps, the residual can have no more than $dl$ non-zero elements, since at most $d$ non-zeros can be introduced in the residual each time $\mP\ve_i$ is added; hence, the average value at step $l$ is lowerbounded by $\|\vxj{r}{l}\|_1/dl$. Substituting this into the previous inequality, we have
\[
\|\vxj{r}{l+1}\|_1 \leq \|\vxj{r}{l}\| -  \tfrac{|m^{(l)}|}{2}
\leq \|\vxj{r}{l}\| -  \tfrac{\|\vxj{r}{l}\|_1}{2dl}
= \|\vxj{r}{l}\| \lrp{ 1 - \tfrac{1}{2dl} }.
\]
Iterating this inequality yields the bound 
$\|\vr^{(l)}\|_1 \leq  \|\vr^{(0)}\|_1 \prod_{k=1}^l ( 1 - 1/(2dk) )
$,
and since $\vr^{(0)} = \eoe{1}{c}$ we have $\|\vr^{(0)}\|_1 = 1$. Thus, $\|\vr^{(l)}\|_1 \leq  \prod_{k=1}^l ( 1 - 1/(2dk) )$. The first inequality of \eqref{thm:conv:gexpm} follows from using the facts $(1+x) \leq e^{x}$ (for $x > -1$) and $\log(l) < \sum_{k=1}^l1/k$ to write
\[
 \prod_{k=1}^l ( 1 - \tfrac{1}{2dk} )  \leq  \exp\biggl\{- \tfrac{1}{2d}\sum_{k=1}^l \tfrac{1}{k} \biggr\}  \leq  \expof{-\tfrac{1}{2d}\log l} = l^{\lrp{- \frac{1}{2d} }}.
\]
The inequality $(1+x) \leq e^{x}$ follows from the Taylor series $e^x = 1 + x + o(x^2)$, and the lowerbound for the partial harmonic sum $\sum_{k=1}^l1/k$ follows from the left-hand rule integral approximation $\log(l) = \int_1^l (1/x) \, dx < \sum_{k=1}^l1/k$.

Finally, to prove inequality~\eqref{eqn:thm:gexpmconv}, we use the fact from the proof of Lemma 3 in~\cite{Kloster-2013-nexpokit} that $\psi_j(1) \leq \psi_0(1) \leq \exp(1)$ for all $j = 0, ... , N$. For the readers' convenience, we include a proof of the inequalities $\psi_j(1) \leq \psi_0(1) \leq \exp(1)$ in the appendix. Thus, we have
$ \| T_N(\mP)\ve_c - \vx \|_1 \leq \psi_0(1) \sum_{j=0}^N \|\vr_j\|_1$ by Lemma~\ref{lem:errorpsi}. Next, note that $\sum_{j=0}^N \|\vr_j\|_1 = \|\vr\|_1$, because $\vr = [\vr_0, \vr_1, ... , \vr_N]^T$. Combining these facts we have
$\| T_N(\mP)\ve_c - \vx \|_1 \leq \exp(1) \|\vr\|_1$, which proves the error bound. The bound on the number of iterations required for convergence follows from simplifying the inequality $\exp(1)l^{-1/(2d)} < \eps$.
\end{proof}
Next we state the convergence result for \gexpmq.
\begin{theorem}\label{thm:conv:gexpmq}
Let $\mP \in \mathbb{R}^{n \times n}$ satisfy $\|\mP \|_1 \leq 1$. Then in the notation of Section~\ref{sec:gexpmq}, using a threshold of
\[
\text{threshold}(\eps,j,N) = \tfrac{\eps}{N\psi_j(1)Z_j}
\]
for each residual block $\vr_j$ will guarantee that
when \gexpmq terminates, the error vector satisfies
$\| T_N(\mP)\ve_c - \vx \|_1 \leq \eps$.
\end{theorem}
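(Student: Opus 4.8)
The plan is to show that throughout the execution of \gexpmq, the quantity $t^{(k)} = \sum_{j=0}^N \psi_j(1)\|\vr_j^{(k)}\|_1$ is an upper bound on the error $\|T_N(\mP)\ve_c - \vx^{(k)}\|_1$ (this is immediate from Lemma~\ref{lem:errorpsi}, since at every step the pair $(\vx^{(k)}, \vr^{(k)})$ is a valid iterate/residual pair for the linear system $\mM\vv = \ve_1\kron\ve_c$), and then to bound $t$ at termination. Termination happens in one of two ways: either $t^{(k)}\le\eps$ is detected, in which case we are immediately done; or the queue empties. The entire difficulty is in the second case, because there we have never directly checked that $t$ is small — instead we must account for all the residual mass that was \emph{discarded} by the threshold test without being relaxed.

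First I would verify the invariant that $t^{(k)}$ bounds the error, and check the bookkeeping update in step 2(d): relaxing an entry of size $r$ in block $j-1$ removes $\psi_{j-1}(1)|r|$ from $t$ and adds at most $\psi_j(1)|r|/j$ (since $\|\mP\ve_i\|_1\le 1$), and because $\psi_j(1)\le\psi_{j-1}(1)$ and $j\ge 2$ after the first step, relaxation never increases $t$. So the only way residual mass fails to be driven down is the discard step. Next I would bound the total contribution to the final residual coming from discards. When the algorithm is processing block $\vr_j$, it set $Z_j = |Q|$ to be the number of nonzeros in $\vr_j$ at that moment, and every entry it discards has magnitude less than $\text{threshold}(\eps,j,N) = \eps/(N\psi_j(1)Z_j)$. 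Hence the total 1-norm of discarded entries in block $j$ is at most $Z_j\cdot \eps/(N\psi_j(1)Z_j) = \eps/(N\psi_j(1))$. These discarded entries are exactly the residual entries in block $j$ that survive to the end (never relaxed), so when the queue empties, $\|\vr_j\|_1$ for each processed block is at most $\eps/(N\psi_j(1))$, and blocks that were never reached contribute nothing beyond this (one should also note block $0$ contributes nothing since its single entry is relaxed immediately, or is itself below threshold only in degenerate cases — a small edge case to dispatch).

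Putting this together, at termination in the queue-empty case,
\[
\|T_N(\mP)\ve_c - \vx\|_1 \;\le\; t \;=\; \sum_{j=0}^N \psi_j(1)\|\vr_j\|_1 \;\le\; \sum_{j=0}^N \psi_j(1)\cdot\frac{\eps}{N\psi_j(1)} \;=\; \frac{(N+1)\eps}{N}\;,
\]
which is slightly off; the fix is that block $0$ (and the final block $N$, whose "children" land directly in $\vx$, not in a residual block) do not accumulate discarded mass, so the sum runs effectively over $N$ indices and gives $\le\eps$. I would therefore be careful to state the threshold's $N$ in the denominator as accounting for exactly the blocks $j=1,\dots,N$ that hold surviving residual, and to confirm the indexing so the geometric-style telescoping of thresholds in \texttt{compute\_threshs} (each $\text{threshold}(\eps,j,N)$ derived from the previous via the ratio $\psi_{j-1}(1)/\psi_j(1)$) is consistent with \eqref{eqn:threshold}. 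The main obstacle is precisely this careful accounting of which residual blocks can retain un-relaxed (discarded) mass and getting the off-by-one in the $N$ versus $N+1$ factor right; the analytic content beyond that is just the monotonicity of $t$ under relaxation and Lemma~\ref{lem:errorpsi}.
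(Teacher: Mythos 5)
Your proposal is correct and follows essentially the same route as the paper: invoke Lemma~\ref{lem:errorpsi}, note that $\vr_0$ is emptied on the first relaxation so block $0$ contributes nothing, and bound each surviving block's mass by $Z_j$ entries each below $\eps/(N\psi_j(1)Z_j)$, giving $N$ terms of $\eps/N$. The only stray remark is the parenthetical about block $N$ --- it can hold discarded mass like any other block, but this does not matter since the correct count is exactly the $N$ blocks $j=1,\dots,N$, which is how the paper resolves the same $N$ versus $N+1$ accounting you flagged.
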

\begin{proof}
From Lemma~\ref{lem:errorpsi} we have $ \|T_N(\mP)\ve_c - \vx\|_1 \leq \sum_{j=0}^N \psi_j(1)\|\vr_j\|_1 $.
During the first iteration we remove the only non-zero entry in $\vr_0 = \ve_c$ from the queue, then add $\mP\ve_c$ to $\vr_1$. Thus, when the algorithm has terminated, we have $\|\vr_0\|_1 = 0$, and so we can ignore the term $\psi_0(1)\|\vr_0\|_1$ in the sum. In the other $N$ blocks of the residual, $\vr_j$ for $j = 1, ... , N$, the steps of \gexpmq delete every entry with magnitude $r$ satisfying
$r \geq \eps/(N\psi_j(1)Z_j)$.
This implies that all entries remaining in block $\vr_j$ are bounded above in magnitude by $\eps/(N\psi_j(1)Z_j)$. Since there can be no more than $Z_j$ non-zero entries in $\vr_j$ (by definition of $Z_j$), we have that
$\|\vr_j\|_1$ is bounded above by $Z_j \cdot \eps/(N\psi_j(1)Z_j)$. Thus, we have
\begin{align*}
\|T_N(\mP)\ve_c - \vx\|_1  &\leq \sum_{j=1}^N \psi_j(1) \lrp{Z_j \tfrac{\eps}{N\psi_j(1)Z_j} }
\end{align*}
and simplifying completes the proof.
\end{proof}

Currently we have no theoretical runtime analysis for \gexpmq. However, because of the algorithm's similarity to \gexpm, and because of our strong heuristic evidence (presented in Section~\ref{sec:exp}), we believe a rigorous theoretical runtime bound exists. 

\section{Networks with a Power-Law Degree Distribution}\label{sec:anl:pow}

In our convergence analysis for \gexpm in Section~\ref{sec:anl}, the inequalities rely on our estimation of the largest entry in the residual vector at step $l$, $m^{(l)}$. In this section we achieve a tighter bound on $m^{(l)}$ by using the distribution of the degrees of the underlying graph instead of just $d$, the maximum degree. In the case that the degrees follow a power-law distribution, we show that the improvement on the bound on $m^{(l)}$ leads to a sublinear runtime for the algorithm.

The degree distribution of a graph is said to follow a power-law if the $k$th largest degree of the graph, $d(k)$, satisfies $d(k) = Q\cdot d \cdot k^{-p}$ for $d = d(1)$ the largest degree in the graph, and positive constants $Q$ and $p$. A degree distribution of this kind applies to a variety of real-world networks~\cite{Faloutsos-1999-power-law}. A more commonly-used definition states that the number of nodes having degree $k$ is equal to $k^{-a}$, but the two definitions can be shown to be equivalent for a certain range of values of their respective exponents, $p$ and $a$~\cite{adamic-zipf-power-law}. In this definition, the values of the exponent $a$ for real-world networks range from 2 to 3, frequently closer to 2. These values correspond to $p=1$ (for $a=2$) and $p=1/2$ (for $a = 3$) in the definition that we use. Finally, we note that, though the definition that we use contains an equality, our results hold for any graph with a degree distribution satisfying a ``sub" power-law, meaning $d(k) \leq Q\cdot k^{-p}$. We now state our main result, then establish some preliminary technical lemmas before finally proving it.
\begin{theorem}\label{thm:work} For a graph with degree distribution following a power-law with $p \in (0,1]$, max degree $d$, and minimum degree $\delta$, \gexpm converges to a $1$-norm error of $\eps$ in work bounded by
\begin{equation}\label{work}
\text{work$(\eps)$} =
  \begin{cases} 
      \hfill O\left(\log(\tfrac{1}{\eps}) \left(\tfrac{1}{\eps}\right)^{\frac{3\delta}{2}} d^2\log(d) \max\{ \log(d), \log(\tfrac{1}{\eps}) \} \right) \hfill & \text{ if $p=1$ } \\
      \hfill O\left(\log(\tfrac{1}{\eps}) \left(\tfrac{1}{\eps}\right)^{\frac{3\delta}{2}} d^{1+\frac{1}{p}} \max\{ \log(d), \log(\tfrac{1}{\eps}) \} \right) \hfill & \text{ if $p\neq 1$ } \\
  \end{cases}
\end{equation}
\end{theorem}

Note that when the maximum degree satisfies $d < n^r$ for any $r < 1/(1+1/p)$, and the minimum degree is a constant independent of $n$, Theorem~\ref{thm:work} implies that the runtime scales sublinearly with the graph size, for a fixed 1-norm error of $\eps$.

In practice, having a minimum degree that is a small constant independent of $n$ is extremely common, and values of $p$ are typically near or slightly less than 1. The condition on the maximum degree (that $d < n^r$ for $r < 1/(1+1/p)$) is slightly less common, with five of our seven datasets (listed in Table \ref{tab:datasets}) satisfying $d < 2.5\cdot n^{1/2}$.

\subsection{Bounding the Number of Non-zeros in the Residual}

In the proof of Theorem $\ref{thm:conv:gexpm}$ we showed that the residual update satisfies
$
 \| \vxj{r}{l+1} \|_1 \leq \| \vxj{r}{l} \|_1 - m^{(l)}(1 - 1/j)
$,
where $m^{(l)}$ is the largest entry in $\vxj{r}{l}$, and $\vr_{j-1}$ is the section of the residual vector where the entry $m^{(l)}$ is located. We used the bound $m^{(l)} \ge \| \vxj{r}{l} \|_1 /(d l)$, which is a lowerbound on the average value of all entries in $\vxj{r}{l}$. This follows from the loose upperbound $d l$ on the number of non-zeros in $\vxj{r}{l}$. We also used the naive upperbound $1/2$ on $(1 - 1/j)$. Here we prove new bounds on these quantities. For the sake of simpler expressions in the proofs, we express the number of iterations as a multiple of $N$, i.e. $Nl$.
\begin{lemma}\label{lem:nnz} Let $d(k)$ := the $k$th largest degree in the graph (with repetition), let $f(m) := \sum_{k=1}^m d(k) $, and let $\nnz{l} :=$ the number of non-zero entries in $\vxj{r}{l}$. Then after $Nl$ iterations of \gexpm we have
\begin{equation}\label{nnz} \nnz{N l} \leq N  f(l). \end{equation}
\end{lemma}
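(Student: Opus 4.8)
The plan is to bound the number of distinct nonzero coordinates that can ever appear in the residual after $Nl$ steps by a counting argument on which graph nodes have been ``touched,'' combined with the block structure of the linear system. First I would observe that a nonzero can only be created in block $\vr_j$ when we relax an entry sitting in block $\vr_{j-1}$, and that relaxing entry $i$ of block $\vr_{j-1}$ adds (at most) $d_i$ new nonzeros, all located in block $\vr_j$, at the coordinates corresponding to the out-neighbors of node $i$. So the total number of nonzeros ever introduced into block $\vr_j$ is at most $\sum_{i \in S_{j-1}} d_i$, where $S_{j-1}$ is the set of nodes whose $\vr_{j-1}$-entry has been relaxed up to the current step. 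Summing over the $N+1$ blocks gives $\nnz{Nl} \le \sum_{j=1}^{N} \sum_{i \in S_{j-1}} d_i$.

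Next I would bound the relevant ``relaxed'' node sets. In $Nl$ iterations of \gexpm, at most $Nl$ relaxation steps occur, so $\sum_{j} |S_{j-1}| \le Nl$. The key inequality is that if one distributes a budget of $Nl$ relaxations among the blocks, the sum $\sum_{i \in S_{j-1}} d_i$ over a set $S_{j-1}$ of size $n_{j-1}$ is maximized (over all graphs and all choices) by taking the $n_{j-1}$ highest-degree nodes, giving $\sum_{i \in S_{j-1}} d_i \le f(n_{j-1}) = \sum_{k=1}^{n_{j-1}} d(k)$. Then I would invoke concavity/superadditivity-type reasoning: since $f$ is concave (each increment $d(k+1) \le d(k)$ adds a non-increasing amount), $\sum_{j=1}^{N} f(n_{j-1})$ subject to $\sum_j n_{j-1} \le Nl$ is maximized when the budget is spread equally, i.e. $n_{j-1} = l$ for each of the $N$ blocks, yielding $\sum_{j=1}^N f(n_{j-1}) \le N f(l)$. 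This gives exactly $\nnz{Nl} \le N f(l)$.

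The main obstacle I anticipate is making the ``spread the budget equally'' step rigorous and checking the bookkeeping about which block a given relaxation step draws from. One must be careful that the residual in block $\vr_0$ starts with a single nonzero and is never re-populated, so effectively only $N$ blocks receive new mass, which is why the factor is $N$ and not $N+1$; and one must confirm that re-relaxing a node in the same block does not create genuinely \emph{new} nonzero coordinates (it writes into already-touched out-neighbor slots of block $\vr_j$), so counting distinct relaxed nodes rather than relaxation steps is what matters — if $|S_{j-1}|$ is interpreted as the number of distinct nodes ever relaxed in block $j-1$, then $\sum_j |S_{j-1}|$ could in principle be less than the number of steps, which only helps. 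The concavity argument itself is routine once phrased correctly: for fixed total $\sum n_{j-1} = M$, the function $(n_0,\dots,n_{N-1}) \mapsto \sum_j f(n_{j-1})$ is Schur-concave, so it is maximized at the balanced point; alternatively one can argue greedily that moving a unit of budget from a larger $n_a$ to a smaller $n_b$ never decreases the sum. I would present whichever of these is cleanest given the conventions already in the paper.
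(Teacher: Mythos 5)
Your proof is correct and follows essentially the same counting argument as the paper: each relaxation of node $i$ in block $j-1$ creates at most $d_i$ new nonzeros in block $j$, the same node can be charged at most once per block, and the worst case over $Nl$ relaxations is the $l$ largest degrees each appearing in all $N$ blocks. The only real difference is that you justify the extremal allocation rigorously via concavity of $f$ (Jensen/Schur-concavity over the per-block budgets $n_{j-1}$ summing to $Nl$), whereas the paper asserts the ``top-$l$ nodes, each $N$ times'' worst case informally; your version is a tightening of the same argument rather than a different route.
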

\begin{proof}
At any given step, the number of new non-zeros we can create in the residual vector is bounded above by the largest degree of all the nodes which have not already had their neighborhoods added to $\vxj{r}{Nl}$. If we have already explored the node with degree = $d(1)$, then the next node we introduce to the residual cannot add more than $d(2)$ new non-zeros to the residual, because the locations in $\vr$ in which the node $d(1)$ would create non-zeros already have non-zero value.

We cannot conclude $\nnz{l} \leq \sum_{k=1}^l d(k) = f(l)$ because this ignores the fact that the same set of $d(1)$ nodes can be introduced into each different time step of the residual, $j = 2, \cdots, N$. Recall that entries of the residual are of the form $\ve_j \otimes \ve_i$ where $i$ is the index of the node, $i = 1, \cdots, n$; and $j$ is the section of the residual, or time step: $j = 2, \cdots, N$ (note that $j$ skips 1 because the first iteration of GS deletes the only entry in section $j=1$ of the residual).  Recall that the entry of $\vr$ corresponding to the 1 in the vector $\eoe{j}{i}$ is located in block $\vr_{j-1}$. Then for each degree, $d(1), d(2), \cdots, d(l)$, we have to add non-zeros to that set of $d(k)$ nodes in each of the $N-1$ different blocks $\vr_{j-1}$ before we move on to the next degree, $d(l+1)$:
\begin{align*}
 \nnz{Nl} &\leq d(1) + \cdots + d(1) + d(2) + \cdots + d(2) + \cdots + d(l)\\
 &\leq Nd(1) + Nd(2) + \cdots + N d(l)
 \end{align*}
which equals $N\cdot \sum_{k=1}^ld(k) = Nf(l)$.
\end{proof}

Lemma~\ref{lem:nnz} enables us to rewrite the inequality
$- m_{Nl} \leq -\|\vxj{r}{Nl}\| / (dNl)$ from the proof of Theorem~\ref{thm:conv:gexpm} as $-m_{Nl} \leq -\|\vxj{r}{Nl}\| / (N f(l))$. Letting $\sigma_{Nl}$ represent the value of $(1 - 1/j)$ in step $Nl$, we can write our upperbound on $\|\vxj{r}{Nl+1}\|_1$ as follows:
\begin{equation}\label{newbd}
 \| \vxj{r}{Nl+1} \|_1 \leq \| \vxj{r}{Nl} \|_1\left( 1 - \tfrac{\sigma_{Nl}}{N f(l)} \right).
\end{equation}

We want to recur this by substituting a similar inequality in for $\| \vxj{r}{Nl} \|_1$, but the indexing does not work out because inequality~\eqref{nnz} holds only when the number of iterations is of the form $Nl$. We can overcome this by combining $N$ iterations into one expression:

\begin{lemma}\label{lem:recur}
In the notation of Lemma~\ref{lem:nnz}, let 
$ s_{l+1} := \min\{ \sigma_{Nl + N}, \sigma_{Nl + N - 1}, \cdots, \sigma_{Nl+1} \}$.
Then the residual from $Nl$ iterations of \gexpm satisfies
 \begin{equation}\label{recur}
 \| \vxj{r}{N(l+1)} \|_1 \leq \| \vxj{r}{Nl} \|_1\left( 1 - \tfrac{s_{l+1}}{N f(l+1)} \right)^N .
\end{equation}

\end{lemma}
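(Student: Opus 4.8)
The plan is to get \eqref{recur} by iterating the one-step estimate behind \eqref{newbd} over the $N$ consecutive relaxation steps that carry $\vxj{r}{Nl}$ to $\vxj{r}{N(l+1)}$, after replacing the two step-dependent quantities $\sigma$ and $\nnz{\cdot}$ by bounds that hold uniformly across that whole block of $N$ steps. In other words, \eqref{newbd} is exactly the $l\mapsto N l$ special case of a per-step recursion; the work here is to run that recursion $N$ times with a block-uniform contraction factor and then multiply.

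First I would recall the single-step estimate derived inside the proof of Theorem~\ref{thm:conv:gexpm}: a relaxation step from index $m$ satisfies $\|\vxj{r}{m+1}\|_1 \le \|\vxj{r}{m}\|_1 - \sigma_m |m^{(m)}|$ with $\sigma_m = 1-1/j$, and combining this with $|m^{(m)}| \ge \|\vxj{r}{m}\|_1/\nnz{m}$ gives
\[ \|\vxj{r}{m+1}\|_1 \le \|\vxj{r}{m}\|_1\Bigl(1 - \tfrac{\sigma_m}{\nnz{m}}\Bigr). \]
This is the atom to chain for $m = Nl,\,Nl+1,\,\dots,\,Nl+N-1$. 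Next I make the contraction factor uniform over this block. For the denominator: although Lemma~\ref{lem:nnz} states the sharp estimate $\nnz{Nl}\le Nf(l)$ only at multiples of $N$, the counting argument in its proof actually bounds the total number of coordinates ever made non-zero during the first $N(l+1)$ iterations by $N f(l+1)$, so $\nnz{m}\le Nf(l+1)$ for every $m \le N(l+1)$, in particular throughout the block. For the numerator: $s_{l+1}$ is, by definition, a lower bound on each of the $N$ values of $\sigma$ occurring in these steps. Hence each of the $N$ atoms weakens to
\[ \|\vxj{r}{m+1}\|_1 \le \|\vxj{r}{m}\|_1\Bigl(1 - \tfrac{s_{l+1}}{Nf(l+1)}\Bigr), \]
and since $1 - s_{l+1}/(Nf(l+1)) \ge 1 - 1/N \ge 0$ (using $s_{l+1}\le 1$ and $f(l+1)\ge 1$) and all residual norms are nonnegative, multiplying these $N$ inequalities telescopically yields exactly \eqref{recur}.

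The step needing the most care — and the reason the lemma is stated with $f(l+1)$ rather than $f(l)$ in the denominator — is precisely the uniform non-zero-count bound: Lemma~\ref{lem:nnz}'s sharp estimate is available only at the left endpoint $Nl$ of the block, so for the later steps one must retreat to the coarser value $Nf(l+1)$, and because $\nnz{\cdot}$ is not literally monotone in the iteration count, one must invoke the cumulative-count form of the argument in the proof of Lemma~\ref{lem:nnz} (rather than $\nnz{N(l+1)}\le Nf(l+1)$ together with a nonexistent monotonicity) to justify $\nnz{m}\le Nf(l+1)$ for all $m$ in the block. Once that bound and the defining property of $s_{l+1}$ are in place, the rest is a routine telescoping product of nonnegative factors.
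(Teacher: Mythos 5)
Your proposal is correct and follows essentially the same route as the paper: chain the one-step contraction over the $N$ steps of the block, justify the uniform bound $\nnz{m}\le Nf(l+1)$ throughout the block via the cumulative counting argument from Lemma~\ref{lem:nnz} (the paper phrases this as $\nnz{Nl+k}\le Nf(l)+k\,d(l+1)\le Nf(l+1)$), and then replace each $\sigma$ by the block minimum $s_{l+1}$. Your explicit check that the contraction factors are nonnegative is a small point the paper leaves implicit, but the argument is the same.
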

\begin{proof}
By Lemma~\ref{lem:nnz} we know that $Nf(l) \geq \nnz{Nl}$ for all $l$. In the proof of Lemma~\ref{lem:nnz} we showed that, during step $Nl$, no more than $d(l)$ new non-zeros can be created in the residual vector.  By the same argument, no more than $d(l+1)$ non-zeros can be created in the residual vector during steps $Nl +k$, for $k = 1, ... , N$. Thus we have $\nnz{Nl+k} \leq Nf(l) + k\cdot d(l+1) \leq Nf(l) + Nd(l+1) = Nf(l+1)$ for $k = 0, 1, ... , N$. With this we can bound $-m_{Nl+k} \leq -\|\vxj{r}{Nl+k}\|_1 / (Nf(l+1))$ for $k=0, ... , N$. Recall we defined $\sigma_{Nl}$ to be the value of $(1-1/j)$ in step $Nl$. With this in mind, we establish a new bound on the residual decrease at each step:
\begin{align*}
 \| \vxj{r}{N(l+1)} \|_1 &\leq \| \vxj{r}{Nl+N-1} \|_1 - m_{Nl+N-1}\sigma_{Nl+N-1} \\
 &\leq \| \vxj{r}{Nl+N-1} \|_1  - \tfrac{ \|\vxj{r}{Nl+N-1} \|_1 \sigma_{Nl+N-1} }{ Nf(l+1) } \\
 &= \| \vxj{r}{Nl+N-1} \|_1 \left( 1 - \tfrac{\sigma_{Nl+N-1} }{ Nf(l+1) } \right) \\
 &\leq  \left( \| \vxj{r}{Nl+N-2} \|_1 - m_{Nl+N-2}\sigma_{Nl+N-2}  \right)\left( 1 - \tfrac{\sigma_{Nl+N-1} }{ Nf(l+1) } \right) \\
&\leq  \| \vxj{r}{Nl+N-2} \|_1 \left( 1 - \tfrac{\sigma_{Nl+N-2} }{ Nf(l+1) } \right) \left( 1 - \tfrac{\sigma_{Nl+N-1} }{ Nf(l+1) } \right),
\end{align*}
and recurring this yields
\[
 \| \vxj{r}{N(l+1)} \|_1
 \leq \| \vxj{r}{Nl} \|_1  \prod_{t=1}^N\left(1 - \tfrac{\sigma_{Nl+N-t}}{Nf(l+1)}\right).
\]
 
From the definition of $s_{l+1}$ in the statement of the lemma, we can upperbound $-\sigma_{Nl+N-t}$ in the last inequality with $-s_{l+1}$. This enables us to replace the product in the last inequality with 
$\left(1 - s_{l+1} / (Nf(l+1)) \right)^N$, which proves the lemma.
\end{proof}

Recurring the inequality in \eqref{recur} bounds the residual norm in terms of $f(m)$:
\begin{corollary}\label{cor:newresbd}
In the notation of Lemma~\ref{lem:recur}, after $Nl$ iterations of \gexpm the residual satisfies
 \begin{equation}\| \vxj{r}{Nl} \|_1 \leq \expof{ - \sum_{k=1}^{l} \tfrac{s_{k}}{f(k)}  } \label{newresbd} \end{equation}
\end{corollary}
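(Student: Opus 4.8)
The plan is to iterate the one-step inequality from Lemma~\ref{lem:recur} exactly as the proof of the first inequality in Theorem~\ref{thm:conv:gexpm} iterated its one-step bound, and then collapse the resulting product of $N$th powers into a single exponential. First I would start from \eqref{recur}, which reads $\| \vxj{r}{N(l+1)} \|_1 \leq \| \vxj{r}{Nl} \|_1\left( 1 - \tfrac{s_{l+1}}{N f(l+1)} \right)^N$, and unroll the recursion from the $l$th level down to the $0$th level. Since $\vr^{(0)} = \eoe{1}{c}$ has unit $1$-norm, this gives
\[
\| \vxj{r}{Nl} \|_1 \leq \prod_{k=1}^{l} \left( 1 - \frac{s_{k}}{N f(k)} \right)^{N}.
\]

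Next I would apply the elementary inequality $1 + x \leq e^{x}$ (valid for $x > -1$, and here $s_k/(Nf(k)) \in (0,1)$ so the hypothesis holds) to each factor. Raising to the $N$th power and multiplying over $k = 1, \dots, l$ turns the product into
\[
\prod_{k=1}^{l} \left( 1 - \frac{s_{k}}{N f(k)} \right)^{N} \leq \prod_{k=1}^{l} \exp\left\{ - \frac{N s_{k}}{N f(k)} \right\} = \exp\left\{ - \sum_{k=1}^{l} \frac{s_{k}}{f(k)} \right\},
\]
where the factor of $N$ in the exponent cancels with the $N$ in the denominator — this cancellation is precisely the point of having phrased Lemma~\ref{lem:recur} in terms of $N$th powers. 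Chaining the two displays yields \eqref{newresbd}.

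There is essentially no serious obstacle here: the corollary is a routine "iterate and take logs" consequence of Lemma~\ref{lem:recur}, and all the real work (controlling the number of non-zeros via $f$, combining $N$ consecutive steps to get a clean $N$th power) was already done in Lemmas~\ref{lem:nnz} and~\ref{lem:recur}. The only points requiring a word of care are: checking that each $1 - s_k/(Nf(k))$ is positive so that $1+x \le e^x$ applies (it is, since $f(k) \geq d(1) \geq 1$ and $s_k \le 1$, so the argument lies in $(0,1)$), and noting that the telescoping bottoms out at $\|\vr^{(0)}\|_1 = 1$ so no constant prefactor survives. I would present it as a short two-line derivation rather than a displayed multi-step argument, mirroring the style used for the harmonic-sum bound in the proof of Theorem~\ref{thm:conv:gexpm}.
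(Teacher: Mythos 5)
Your proposal is correct and follows essentially the same route as the paper's proof: unroll the recursion of Lemma~\ref{lem:recur} starting from $\|\vr^{(0)}\|_1 = 1$, bound each factor $\left(1 - s_k/(Nf(k))\right)^N$ via $1-x \leq e^{-x}$, and cancel the factors of $N$ in the exponent. Your added remark checking that the argument of each factor lies in $(0,1)$ is a minor (and harmless) extra precaution not spelled out in the paper.
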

\begin{proof}
By recurring the inequality of Lemma~\ref{lem:recur}, we establish the new bound
$
 \| \vxj{r}{Nl} \|_1  \leq \| \vxj{r}{0} \|_1 \prod_{k=1}^{l}\left( 1 - s_k/(N f(k)) \right)^N  $.
The factor $\left( 1 - s_k / (N f(k)) \right)^N$ can be upperbounded by $\expof{ \sum_{k=1}^{l}\left( - s_k / (N f(k)) \right)\cdot N  }$, using the inequality $1-x\leq \exp(-x)$.
Cancelling the factors of $N$ and noting that $\|\vxj{r}{0}\|_1 = 1$ completes the proof.
\end{proof}

We want an upperbound on $- \sum_{k=1}^m 1/f(k)$, so we need a lowerbound on $\sum_{k=1}^m 1/f(k)$. This requires a lowerbound on $1/f(k)$, which in turn requires an upperbound on $f(k)$. So next we upperbound $f(k)$ using the degree distribution, which ultimately will allow us to upperbound $\|\vxj{r}{Nl}\|_1$ by an expression of $d$, the max degree.

\subsection{Power-Law Degree Distribution}\label{sec:pldd}
If we assume that the graph has a power-law degree distribution, then we can bound $d(k) \leq Q \cdot d \cdot k^{-p}$ for constants $p$ and $Q > 0$. Let $\delta$ denote the minimum degree of the graph and note that, for sparse networks in which $|E| = O(n)$, $\delta$ is a small constant (which, realistically, is $\delta=1$ in real-world networks). We can assume that $Q = 1$ because it will be absorbed into the constant in the big-O expression for work in Theorem~\ref{thm:work}. With these bounds in place, we will bound $f(k)$ in terms of $d$, $\delta$, and $p$.

\begin{lemma}\label{lem:deg} Define
\begin{equation}
C_p = 
  \begin{cases} 
      \hfill d(1+\log d) \hfill & \text{ if $p=1$ } \\
      \hfill \tfrac{1}{(1-p)}d^{\frac{1}{p}} \hfill & \text{ if $p\in (0,1)$ .} \\
  \end{cases}
\end{equation}
Then in the notation described above we have
$f(k) \leq C_p + \delta k$.
\end{lemma}
\begin{proof} The power-law bound on the degrees states that $d(k) \leq d \cdot k^{-p}$. Note that for $k > (d/\delta)^{\frac{1}{p}}$ the power-law definition of $d(k)$ above implies $d(k) < \delta$, the minimum degree, which is impossible. This leads to two cases: $k > (d/\delta)^{\frac{1}{p}}$ and $k \leq (d/\delta)^{\frac{1}{p}}$. 

The sum of the first $\floor{(d/\delta)^{\frac{1}{p}}}$ terms is $\sum_{t=1}^{\floor{(d/\delta)^{\frac{1}{p}}}} d(t) = f(\floor{(d/\delta)^{\frac{1}{p}}}) \leq \sum_{t=1}^{\floor{(d/\delta)^{\frac{1}{p}}}} d \cdot t^{-p}$. If we add terms to this sum, then each term that would be added after $d\cdot (\floor{(d/\delta)^{\frac{1}{p}}})^{-p}$ will simply be $d(t) = \delta$, the minimum degree. We can upperbound the sum of any terms beyond $k > (d/\delta)^{\frac{1}{p}}$ by $\delta k$.
Thus we have the bound $f(k) \leq f(\floor{(d/\delta)^{\frac{1}{p}}}) + \delta k$.
In the case that $p=1$, the bound on the partial harmonic sum used in the proof of Theorem~\ref{thm:conv:gexpm} yields $f(\floor{(d/\delta)^{\frac{1}{p}}}) = f(\floor{d/\delta}) \leq d \cdot \sum_{t=1}^{\floor{d/\delta}} t^{-1} \leq  d( 1 + \log (d/\delta) ) \leq d(1+\log d)$, proving the $p=1$ case. If $p\neq 1$, we instead upperbound 
\[
\sum_{t=1}^{\floor{(d/\delta)^{\frac{1}{p}}}} t^{-p}
\leq d \left(1+\int_1^{(d/\delta)^{\frac{1}{p}}} x^{-p}\text{d$x$} \right)
= d\left( 1 + \tfrac{1}{1-p}\left( (d/\delta)^{\frac{1}{p}-1} - 1 \right) \right)
\leq \tfrac{1}{1-p}\left(d\cdot d^{\frac{1}{p}-1}\right)
\]
where the last inequality holds because $1 - \tfrac{1}{1-p} < 0$ for $p \in (0,1)$. Simplifying yields $\tfrac{1}{1-p}d^{\frac{1}{p}}$ as the final bound.
\end{proof}

We want to use this tighter bound on $f(k)$ to establish a tighter bound on $\| \vxj{r}{Nl}\|_1$. We can accomplish this using inequality \eqref{newresbd} if we first bound the sum $\sum_{k=b}^m 1/f(k)$ for constants $b,m$.

\begin{lemma}\label{lem:sum} In the notation of Lemma~\ref{lem:deg} we have
\begin{equation}\label{sum} \sum_{k=b}^m \tfrac{1}{f(k)} \geq \tfrac{1}{\delta}\log \left( \frac{\delta m + \delta + C_p}{\delta b+C_p}\right) \end{equation}
\end{lemma}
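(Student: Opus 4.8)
The plan is to reduce the sum to a geometric-series-style integral comparison. First I would invoke Lemma~\ref{lem:deg}, which gives $f(k) \le d(1+\log d) + \delta k = C + \delta k$; taking reciprocals (everything is positive) yields the termwise bound $1/f(k) \ge 1/(C + \delta k)$, so it suffices to lower bound $\sum_{k=b}^m 1/(C+\delta k)$.

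Next I would compare this sum to an integral. Since the function $x \mapsto 1/(C+\delta x)$ is positive and decreasing on $[b,\infty)$, the right-endpoint rule gives $\tfrac{1}{C+\delta k} \ge \int_k^{k+1} \tfrac{1}{C+\delta x}\,\mathrm{d}x$ for each $k$. Summing over $k = b, b+1, \dots, m$ telescopes the intervals into a single integral:
\[
\sum_{k=b}^m \frac{1}{f(k)} \;\ge\; \sum_{k=b}^m \frac{1}{C+\delta k} \;\ge\; \int_b^{m+1} \frac{\mathrm{d}x}{C + \delta x} \;=\; \frac{1}{\delta}\,\log\!\left(\frac{C + \delta(m+1)}{C + \delta b}\right).
\]
Finally I would rewrite $C + \delta(m+1) = \delta m + \delta + C$ and $C + \delta b = \delta b + C$ to match the claimed form in~\eqref{sum}, completing the proof. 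This mirrors the partial-harmonic-sum estimate already used in the proof of Theorem~\ref{thm:conv:gexpm}, just with the shifted, $\delta$-scaled denominator.

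There is essentially no hard step here; the only point requiring a moment of care is the direction of the integral comparison — one must use the interval $[k,k+1]$ (so that the integrand never exceeds $1/(C+\delta k)$) and sum from $b$ up to $m$ to land on the limits $b$ and $m+1$, which is exactly what produces the $\delta m + \delta + C$ in the numerator rather than $\delta m + C$. Everything else is routine integration and algebraic bookkeeping.
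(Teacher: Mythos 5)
Your proposal is correct and follows essentially the same route as the paper: bound $f(k) \le C + \delta k$ via Lemma~\ref{lem:deg}, take reciprocals, and compare the sum to $\int_b^{m+1} \frac{dx}{C+\delta x}$ using the monotonicity of the integrand (the paper calls this the left-hand rule; your inequality is the same one). The only cosmetic difference is that you state the termwise integral comparison explicitly, which the paper leaves implicit.
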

\begin{proof}
From Lemma~\ref{lem:deg} we can write $f(k) \leq C_p + \delta k$. Then $1/f(k) \geq 1/(C_p+\delta k)$, and so we have $\sum_{k=1}^m 1/f(k) \geq \sum_{k=1}^m 1/(C_p + \delta k)$.
Using a left-hand rule integral approximation, we get
\begin{equation}
\sum_{k=b}^m \tfrac{1}{C_p + \delta k}  \geq \int_b^{m+1} \tfrac{1}{C_p+\delta x}\text{d$x$} 
 = \tfrac{1}{\delta } \log \left( \frac{\delta m + \delta  + C_p }{\delta b + C_p} \right).
\end{equation}
\end{proof}
Plugging \eqref{sum} into \eqref{newresbd} yields, after some manipulation, our sublinearity result:
\begin{theorem}\label{thm:iterbnd}
In the notation of Lemma~\ref{lem:deg}, for a graph with power-law degree distribution with exponent $p \in (0,1]$, \gexpm attains $\|\vxj{r}{Nl}\|_1 < \eps$ in $Nl$ iterations if 
$
 l  > (3/\delta)(1/\eps)^{3\delta/2}C_p
$.
\end{theorem}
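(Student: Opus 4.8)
The plan is to feed Corollary~\ref{cor:newresbd} --- which already gives $\|\vxj{r}{Nl}\|_1 \le \expof{-\sum_{k=1}^{l} s_k/f(k)}$ --- into the power-law estimate of Lemma~\ref{lem:sum}, after first sharpening the bound on the coefficients $s_k$. Since we want $\|\vxj{r}{Nl}\|_1 < \eps$, it suffices to show $\sum_{k=1}^{l} s_k/f(k) > \log(1/\eps)$ whenever $l > (4/\delta)(1/\eps)^{3\delta/2}(d\log d)$. The key point --- and the reason the exponent comes out $3\delta/2$ rather than the $2\delta$ that the crude estimate $s_k \ge \tfrac12$ would give --- is that $1/(2/3) = 3/2$, so I want to argue $s_k \ge \tfrac23$ for all but a bounded number of indices $k$.

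First I would pin down the $s_k$. Recall $\sigma_m = 1 - 1/j$, where $\vr_{j-1}$ is the residual block relaxed at step $m$; thus $\sigma_m = 0$ only at step $1$ (which relaxes $\vr_0 = \ve_c$), $\sigma_m = \tfrac12$ only when the relaxed block is $\vr_1$, and $\sigma_m \ge \tfrac23$ in every other step. Now block $\vr_1$ receives mass only from relaxations of $\vr_0$, and $\vr_0$ is relaxed exactly once (at step $1$), after which it stays permanently zero; the mass it then deposits into $\vr_1$ is $\mP\ve_c$, which has at most $d$ nonzeros. Since nothing further is ever added to $\vr_1$, each of its (at most $d$) coordinates is relaxed at most once, so over the whole run \gexpm performs at most $d$ relaxations of block $\vr_1$. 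Hence at most $d$ of the length-$N$ passes indexed by $k$ in Lemma~\ref{lem:recur} contain a $\vr_1$-relaxation, and on every remaining pass $s_k \ge \tfrac23$.

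Next I would assemble the bound. Let $B \subseteq \{1,\dots,l\}$ collect the ``bad'' indices --- those whose pass contains step $1$ or a $\vr_1$-relaxation --- so $|B| \le d+1$; since $f$ is nondecreasing with $f(1) = d(1) = d$, each discarded term $1/f(k)$ is at most $1/d$, giving
\[
\sum_{k=1}^{l} \frac{s_k}{f(k)} \;\ge\; \frac23\sum_{k=1}^{l}\frac{1}{f(k)} \;-\; \frac23\sum_{k \in B}\frac{1}{f(k)} \;\ge\; \frac23\sum_{k=1}^{l}\frac{1}{f(k)} \;-\; \frac43 .
\]
Applying Lemma~\ref{lem:sum} with $b=1$, $m=l$, and $C = d(1+\log d)$ turns $\sum_{k=1}^{l} 1/f(k)$ into $\tfrac1\delta\log\!\big((\delta l + \delta + C)/(\delta + C)\big)$, so Corollary~\ref{cor:newresbd} yields $\|\vxj{r}{Nl}\|_1 \le \expof{4/3}\big((\delta+C)/(\delta l + \delta + C)\big)^{2/(3\delta)}$. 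Forcing this below $\eps$ rearranges to $\delta l + \delta + C > (\delta + C)\,\eps^{-3\delta/2}\,\expof{2\delta}$, which is implied by $l > \tfrac{\delta+C}{\delta}\expof{2\delta}(1/\eps)^{3\delta/2}$; since $\delta \le d$ and $\delta + d(1+\log d) \le 4\,d\log d$ for $d \ge 2$, the stated bound $l > (4/\delta)(1/\eps)^{3\delta/2}(d\log d)$ follows once the $\expof{2\delta}$ factor is absorbed into the leading constant (or, more honestly, once one is a bit more frugal than the sketch above with the $O(1)$ additive slack).

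The step I expect to be the real obstacle is the second one: establishing $s_k \ge \tfrac23$ on all but $O(d)$ passes. This relies on the block structure of $\mM$ --- that $\vr_1$ is fed exclusively by the single relaxation of $\vr_0$ --- surviving contact with Gauss--Southwell's greedy, seemingly adversarial coordinate choices, and on the count of $\vr_1$-relaxations being a constant independent of both $l$ and $\eps$. Everything after that is bookkeeping, apart from the minor nuisance of squeezing the clean constant $4$ out of the $O(1)$ losses, which the exponential inflates into a multiplicative $\expof{O(\delta)}$ factor.
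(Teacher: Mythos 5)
Your proposal is correct and follows essentially the same route as the paper: Corollary~\ref{cor:newresbd} fed into Lemma~\ref{lem:sum}, powered by the same key observation that block $\vr_1$ can receive at most $d$ nonzeros (the neighbors of $c$) and hence be relaxed at most $d$ times, so $s_k \ge \tfrac{2}{3}$ on all but $O(d)$ of the length-$N$ passes. The one divergence is your bookkeeping of the bad passes --- subtracting an additive $O(1)$ slack from $\tfrac{2}{3}\sum_{k=1}^{l} 1/f(k)$, which exponentiates into a multiplicative $\expof{2\delta}$ that spoils the stated constant $4/\delta$ --- whereas the paper assigns coefficient $\tfrac12$ to the $d$ largest terms $1/f(k)$, discards them outright, and applies the integral bound of Lemma~\ref{lem:sum} with $b=d+1$, which is precisely the ``more frugal'' accounting you anticipate needing.
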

\begin{proof}
Before we can substitute \eqref{sum} into \eqref{newresbd}, we have to control the coefficients $s_i$. Note that the only entries in $\vr$ for which $s_k = (1-1/j)$ is equal to $1/2$ are the entries that correspond to the earliest time step, $j = 2$ (in the notation of Section~\ref{sec:linsys}). There are at most $d$ iterations that have a time step value of $j=2$, because only the neighbors of the starting node, node $c$, have non-zero entries in the $j = 2$ time step. Hence, every iteration other than those $d$ iterations must have $s_k \geq (1-1/j)$ with $j \geq 3$, which implies $s_k \geq \tfrac{2}{3}$. We cannot say \textit{which} $d$ iterations of the $Nl$ total iterations occur in time step $j=2$. However, the first $d$ values of $1/f(k)$ in $\sum_{k=1}^m s_k/f(k)$ are the largest in the sum, so by assuming those $d$ terms have the smaller coefficient (1/2 instead of 2/3), we can guarantee that
\begin{equation}\label{split}
- \sum_{k=1}^{l} \tfrac{s_{k}}{f(k)} < - \sum_{k=1}^d \tfrac{1/2}{f(k)} - \sum_{k=d+1}^{l} \tfrac{2/3}{f(k)} 
\end{equation}

To make the proof simpler, we omit the sum $\sum_{k=1}^d (1/2)/f(k)$ outright. From Corollary~\ref{cor:newresbd} we have
$\| \vxj{r}{Nl} \|_1  \leq \expof{ - \sum_{k=1}^{l} s_{k}/f(k) }$, which we can bound above with
$\expof{ - (2/3) \sum_{k=d+1}^{l} 1/f(k)}$, using inequality $\eqref{split}$. Lemma~\ref{lem:sum} allows us to upperbound the sum $-\sum_{k=d+1}^{l} 1/f(k)$, and simplifying yields
\begin{align*}
\| \vxj{r}{Nl} \|_1 \leq & \left( \frac{\delta l + \delta  + C_p}{\delta (d+1) + C_p} \right)^{-\frac{2}{3\delta }}. \label{resdelta}\\
\end{align*}
To guarantee $\| \vxj{r}{Nl} \|_1 < \eps$, then, it suffices to show that $\delta l + \delta  + C_p > (1/\eps)^{3\delta /2}(\delta d + \delta  +C_p)$. This inequality holds if $l$ is greater than $(1/\delta)(1/\eps)^{3\delta/2}(\delta d + \delta  +C_p)$.
Hence, it is enough for $l$ to satisfy
\[
 l \geq \tfrac{3}{\delta }(\tfrac{1}{\eps})^{\frac{3\delta }{2}}C_p.
\]
This last line requires the assumption $(\delta d + \delta  + C_p) < 3C_p$, which holds only if $\log d$ is larger than $\delta$ (in the case $p=1$), or if $d^{\frac{1}{p}-1}$ is larger than $\delta$ (in the case $p\neq 1$). Since we have been assuming that $d$ is a function of $n$ and $\delta$ is a constant independent of $n$, it is safe to assume this.
\end{proof}

With these technical lemmas in place, we are prepared to prove Theorem~\ref{thm:work} that gives the runtime bound for the \gexpm algorithm on graphs with a power-law degree distribution.

\paragraph{Proof of Theorem~\ref{thm:work}}
 Theorem \ref{thm:iterbnd} states that $ l \geq (3/\delta)(1/\eps)^{3\delta / 2}C_p$ will guarantee $\| \vxj{r}{Nl} \|_1 < \eps$. It remains to count the number of floating point operations performed in $Nl$ iterations.

Each iteration involves a vector add consisting of at most $d$ operations, and adding a column of $\mP$ to the residual, which consists of at most $d$ adds. Then, each entry that is added to the residual requires a heap update.
The heap updates at iteration $k$ involve at most $O(\log \nnz{k}  )$ work, since the residual heap contains at most $\nnz{k}$ non-zeros at that iteration. The heap is largest at the last iteration, so we can upperbound $\nnz{k} \leq \nnz{Nl}$ for all $k \leq Nl$. Thus, each iteration consists of no more than $d$ heap updates, and so $d\log ( \nnz{Nl} )$ total operations involved in updating the heap. Hence, after $Nl$ iterations, the total amount of work performed is upperbounded by $O(Nld \log \nnz{Nl})$.

After applying Lemmas \ref{lem:nnz} and \ref{lem:deg} we know the number of non-zeros in the residual (after $Nl$ iterations) will satisfy $ \nnz{Nl} \leq N f(l) < N \left( C_p + \delta l\right)$. Substituting in the expression for $l$ from \ref{thm:iterbnd} yields $\nnz{Nl} < N \left( C_p  + \delta  (3/\delta)(1/\eps)^{3\delta / 2} C_p \right )$. Upperbounding $C_p < (1/\eps)^{3\delta / 2} C_p$ allows us to write
\begin{equation}\nnz{Nl} < 4N(\tfrac{1}{\eps})^{\frac{3\delta }{2}} C_p. \label{worknnz}
\end{equation}

We can upperbound the work, work$(\eps)$, required to produce a solution with error $< \eps$, by using the inequalities in \eqref{worknnz} and in Theorem~\ref{thm:iterbnd}. We expand the bound
$ \text{work$(\eps)$} < Nld\log \nnz{Nl} $  to
\begin{align*}
&< Nd \left( (\tfrac{3}{\delta })(\tfrac{1}{\eps})^{\frac{3\delta}{2} }C_p \right) \cdot \log ( 4N(\tfrac{1}{\eps})^{\frac{3\delta }{2}} C_p ) \\
&< N \left( (\tfrac{3}{\delta })(\tfrac{1}{\eps})^{\frac{3\delta}{2} }d C_p \right)  \cdot \left( \log (4N) + \frac{3\delta }{2}\log(\tfrac{1}{\eps})+ \log(C_p)  \right),
\end{align*}
which we can upperbound with
$O\left( 3N(\tfrac{1}{\eps})^{\frac{3\delta }{2}} dC_p \cdot 4 \cdot\max\{ \log(d), \log(1/\eps) \} \right)$. This proves $\text{work$(\eps)$} =  O\left(N \left(1/\eps\right)^{3\delta /2} d C_p \cdot \max\{ \log(C_p), \log(1/\eps) \} \right)$. Replacing $N$ with the expression from Lemma \ref{lem:Nbound} yields the bound on total work given in Theorem~\ref{thm:work}.

\section{Experimental Results}\label{sec:exp}
Here we evaluate our algorithms' accuracy and speed for large real-world and synthetic networks. 

\paragraph{Overview}
To evaluate accuracy, we examine how well the \gexpmq function identifies the largest entries of the true solution vector. This is designed to study how well our approximation would work in applications that use large-magnitude entries to find important nodes (Section~\ref{sec:exp:acc}). We find a tolerance of $10^{-4}$ is sufficient to accurately find the largest entries at a variety of scales. We also provide more insight into the convergence properties of \expmimv by measuring the accuracy of the algorithm as the size $z$ of its heap varies (Section~\ref{sec:exp:expmimv}). Based on these experiments, we recommend setting the subset size for that algorithm to be near $(\text{nnz}(\mP) / n)$ times the number of large entries desired.

We then study how the algorithms scale with graph size. We first compare their runtimes on real-world graphs with varying sizes (Section~\ref{sec:exp:run}). The edge density and maximum degree of the graph will play an important role in the runtime. This study illustrates a few interesting properties of the runtime that we examine further in an experiment with synthetic forest-fire graphs of up to a billion edges. Here, we find that the runtime scaling grows roughly as $d^2$, as predicted by our theoretical results.

\paragraph{Real-world networks}
The datasets used are summarized in Table \ref{tab:datasets}. They include a version of the flickr graph from~\cite{Bonchi-2012-fast-katz} containing just the largest strongly-connected component of the original graph; dblp-2010 from~\cite{boldi-2011-layered}, itdk0304 in~\cite{Caida-2005-network}, ljournal-2008 from~\cite{boldi-2011-layered, Chierichetti:2009:CSN:1557019.1557049}, twitter-2010~\cite{Kwak2010-Twitter} 
webbase-2001 from~\cite{hirai2000-webbase, boldi2005-codes},
 and the friendster graph in~\cite{Yang-2012-ground-truth}.

\begin{table}
\centering
\caption{The real-world datasets we use in our experiments span three orders of magnitude in size.} \label{tab:datasets}
\begin{tabularx}{\linewidth}{Xrrrrr} \toprule
Graph& $|V|$ &$\nnz{\mP}$ &  $\nnz{\mP}/|V|$ & $d$ & $\sqrt{|V|}$\\ \midrule 
\texttt{itdk0304}&       190,914   &  1,215,220 & 6.37 & 1,071 & 437 \\ %
\texttt{dblp-2010}&        226,413   &  1,432,920 & 6.33 & 238 & 476 \\ %
\texttt{flickr-scc}&        527,476   &  9,357,071 & 17.74 & 9,967 & 727 \\  %
\texttt{ljournal-2008} &  5,363,260 &  77,991,514 & 14.54 & 2,469 & 2,316 \\ %
\texttt{webbase-2001} & 118,142,155   & 1,019,903,190 & 8.63 & 3,841 & 10,870 \\%
\texttt{twitter-2010} & 33,479,734 & 1,394,440,635 & 41.65 & 768,552 & 5,786 \\%
\texttt{friendster} &  65,608,366   & 3,612,134,270 & 55.06 & 5,214 & 8,100 \\%
 \bottomrule
\end{tabularx}
\end{table}

\paragraph{Implementation details}
All experiments were performed on either a dual processor Xeon e5-2670 system with 16 cores (total) and 256GB of RAM or a single processor Intel i7-990X, 3.47 GHz CPU and 24 GB of RAM. Our algorithms were implemented in C++ using the Matlab MEX interface. All data structures used are memory-efficient: the solution and residual are stored as hash tables using Google's \texttt{sparsehash} package. The precise code for the algorithms and the experiments below are available via \texttt{https://www.cs.purdue.edu/homes/dgleich/codes/nexpokit/}.

\paragraph{Comparison}
We compare our implementation with a state-of-the-art Matlab function for computing the exponential of a matrix times a vector, \expmv, which uses a Taylor polynomial approach~\cite{Al-Mohy-2011-exponential}. We customized this method with the knowledge that $\normof[1]{\mP} = 1$. This single change results in a great improvement to the runtime of their code. In each experiment, we use as the ``true solution'' the result of a call to \expmv using the `single' option, which guarantees a relative backward error bounded by $2^{-24}$, or, for smaller problems, we use a Taylor approximation with the number of terms predicted by Lemma~\ref{lem:deg}.

\subsection{Accuracy on Large Entries}\label{sec:exp:acc}
When both $\gexpm$ and $\gexpmq$ terminate, they satisfy a 1-norm error of $\eps$. Many applications do not require precise solution \emph{values} but instead would like the correct \emph{set} of large-magnitude entries.
To measure the accuracy of our algorithms in identifying these large-magnitude entries, we examine the set precision of the approximations. Recall that the precision of a set $T$ that approximates a desired set $S$ is the size of their intersection divided by the total size: $|S \cap T|/|S|$. Precision values near 1 indicate accurate sets and values near 0 indicate inaccurate sets.  We show the precision as we vary the solution tolerance $\eps$ for the $\gexpmq$ method in Figure~\ref{fig:accuracy_vs_error}. The experiment we conduct is to take a graph, estimate the matrix exponential for 100 vertices (trials) for our method with various tolerances $\eps$, and compare the sets of the top $100$ vertices \emph{that are not neighbors of the seed node} between the true solution and the solution from our algorithm. We remove the starting node and its neighbors because these entries are \emph{always} large, so accurately identifying them is a near-guarantee. The results show that, in median performance, we get the top 100 set completely correct with $\eps = 10^{-4}$ for the small graphs. 

\begin{figure}[tp]
\includegraphics[width=0.5\linewidth]{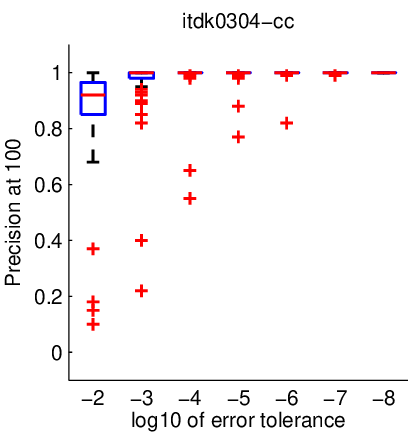}%
\includegraphics[width=0.5\linewidth]{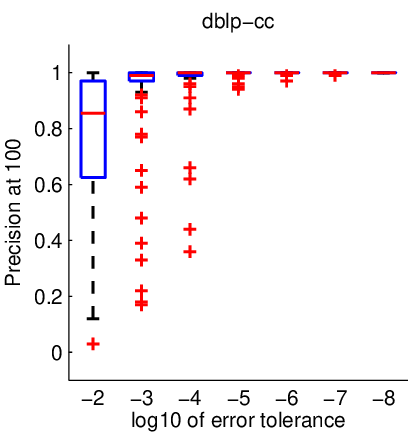}
\includegraphics[width=0.5\linewidth]{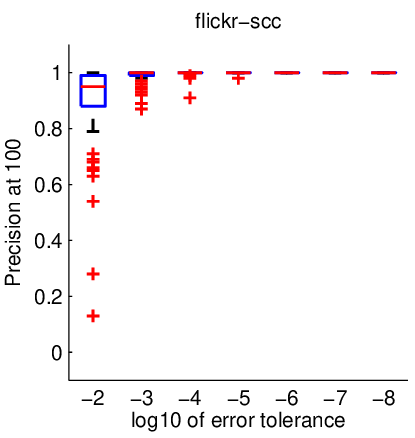}%
\includegraphics[width=0.5\linewidth]{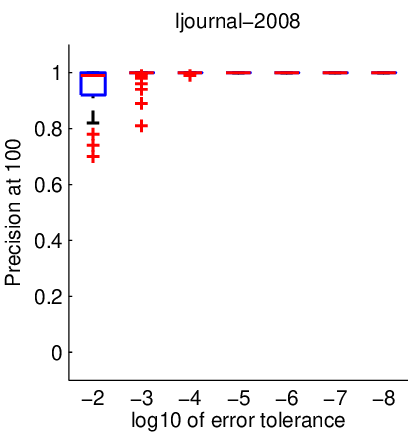}
\caption{We ran our method $\gexpmq$ for 100 different seed nodes as we varied the tolerance $\eps$. These four figures show box-plots over the 100 trials of the set precision scores of the top-100 results compared to the true top-100 set. These illustrate that a tolerance of $10^{-4}$ is sufficient for high accuracy.}
\label{fig:accuracy_vs_error}
\end{figure}

Next, we study how the work performed by the algorithm $\gexpmq$ scales with the accuracy. For this study, we pick a vertex at random and vary the maximum number of iterations performed by the \gexpmq algorithm. Then, we look at the set precision for the top-$k$ sets. The horizontal axis in Figure~\ref{fig:steps_vs_accuracy} measures the number of effective matrix-vector products based on the number of edges explored divided by the total number of non-zeros of the matrix. Thus, one matrix-vector product of work corresponds with looking at each non-zero in the matrix once.

The results show that we get good accuracy for the top-$k$ sets up to $k=1000$ with a tolerance of $10^{-4}$, \emph{and} converge in less than one matrix-vector product, with the sole exception of the flickr network. This network has been problematic for previous studies as well~\cite{Bonchi-2012-fast-katz}. Here, we note that we get good results in less than one matrix-vector product, but we do not detect convergence until after a few matrix-vector products worth of work. 

\begin{figure}[tp]
\centering
\includegraphics[width=0.5\linewidth]{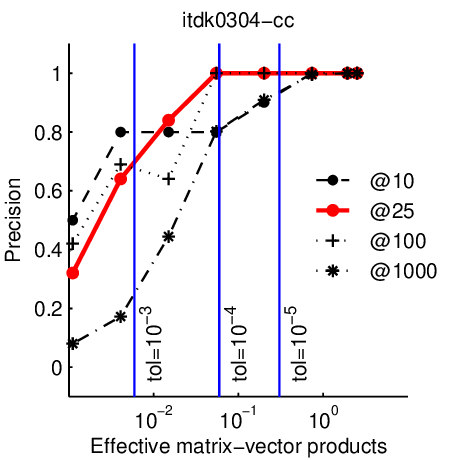}%
\includegraphics[width=0.5\linewidth]{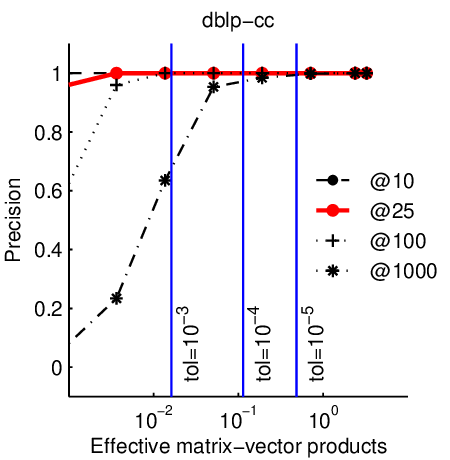}
\includegraphics[width=0.5\linewidth]{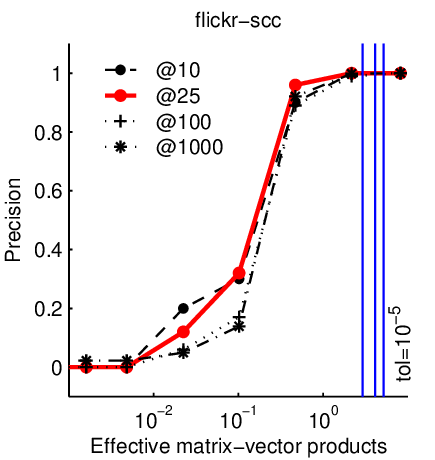}%
\includegraphics[width=0.5\linewidth]{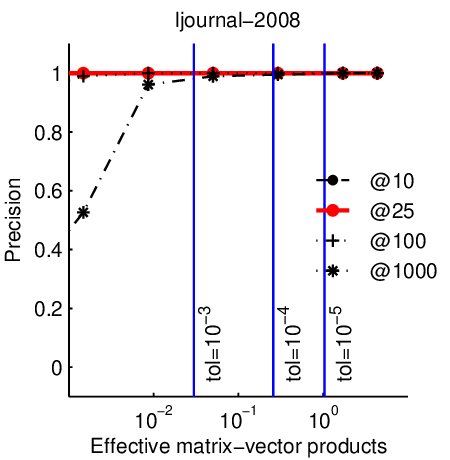}
\caption{For a single vertex seed, we plot how the precision in the top-$k$ entries varies with the amount of work for the $\gexpmq$ algorithm. The four pictures show the results of the four small graphs. We see that for every graph except flickr, the results are good with a tolerance of $10^{-4}$, which requires less than one mat-vec worth of work. }
\label{fig:steps_vs_accuracy}
\end{figure}

\subsubsection{Accuracy \textit{\&} non-zeros with incomplete matrix-vector products}\label{sec:exp:expmimv}
The previous studies explored the accuracy of the $\gexpmq$ method. Our cursory experiments showed that $\gexpm$ behaves similarly because it also achieves an $\eps$ error in the 1-norm. In contrast, the \expmimv method is rather different in its accuracy because it prescribes only a total size of intermediate heap; we are interested in accuracy as we let the heap size increase.

The precise experiment is as follows. For each graph, repeat the following: first, compute 50 node indices uniformly at random. For each node index, use \expmimv to compute $\epec$ using different values for the heap size parameter: $z = 100$, 200, 500, 1000, 2000, 5000, 10000. Figure~\ref{fig:expmimv:acc} displays the median of these 50 trials for each parameter setting for both the 1-norm error and the top-1000 set precision of the \expmimv approximations. (The results for top-100 precision, as in the previous study, were effectively the same.) 
The plot of the 1-norm error in Figure~\ref{fig:expmimv:acc} (left) displays clear differences for the various graphs, yet there are pairs of graphs that have nearby errors (such as itdk0304 and dblp).  The common characteristic for each pair appears to be the edge density of the graph. We see this effect more strongly in the right plot where we look at the precision in the top-1000 set.

Again, set precision improves for all datasets as more non-zeros are used. If we normalize by edge density (by dividing the number of non-zeros used by the edge density of each graph) then the curves cluster. Once the ratio (non-zeros used / edge density) reaches 100, \expmimv attains a set precision over 0.95 for \emph{all} datasets on the 1,000 largest-magnitude nodes, regardless of the graph size. We view this as strong evidence that this method should be useful in many applications where precise numeric values are not required.

\begin{figure}[t]
\includegraphics[width=0.5\linewidth]{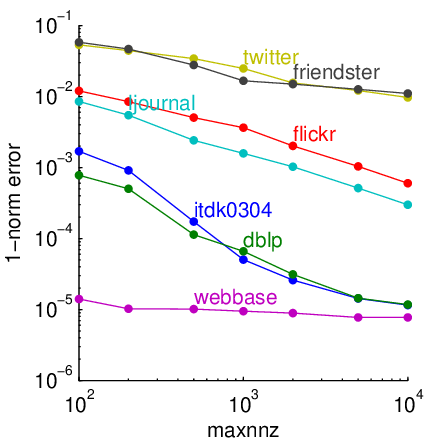}%
\includegraphics[width=0.5\linewidth]{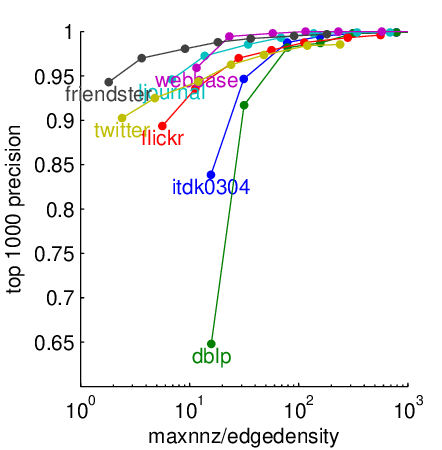}

\caption{Here we display the performance of the $\expmimv$ method. The left figure shows the 1-norm error compared with the number of non-zeros retained in the matrix-vector products on our set of graphs. The groups of curves with similar convergence have comparable edge densities. The right figure shows how the set precision converges as we increase the number of non-zeros, relative to the edge density of the graph ($\nnz{\mP}/n$). 
}
\label{fig:expmimv:acc}
\end{figure}

\subsection{Runtime \textit{\&} Input-size}\label{sec:exp:run}
Because the algorithms presented here are intended to be fast on large, sparse networks, we continue our study by investigating how their speed scales with data size. Figure~\ref{fig:runtimes} displays the median runtime for each graph, where the median is taken over 100 trials for the smaller graphs, and 50 trials for the twitter and friendster datasets. Each trial consists of computing a column $\expof{\mP}\ve_c$ for randomly chosen $c$. All algorithms use a fixed 1-norm error tolerance of $10^{-4}$; the \expmimv method uses 10,000 non-zeros, which may not achieve our desired tolerance, but identifies the right set with high probability, as evidenced in the experiment of Section~\ref{sec:exp:expmimv}. 

\begin{figure}
\centering
\includegraphics[width=0.8\linewidth]{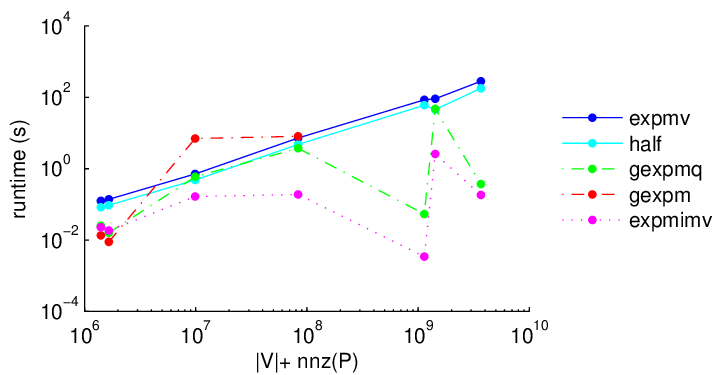}%
\caption{The median runtime of our methods for the seven graphs over 100 trials (only 50 trials for the largest two datasets), compared with the method \texttt{expmv} of~\citet{Al-Mohy-2011-exponential} using the \texttt{single} and \texttt{half} accuracy settings (which we label as \texttt{expmv} and \texttt{half}, respectively). The coordinate relaxation methods have highly variable runtimes, but can be very fast on graphs such as webbase (the point nearest $10^{9}$ on the $x$-axis). We did not run the \gexpm function for matrices larger than the livejournal graph.}
\label{fig:runtimes}
\end{figure}

\begin{figure}
\includegraphics[width=0.33\linewidth]{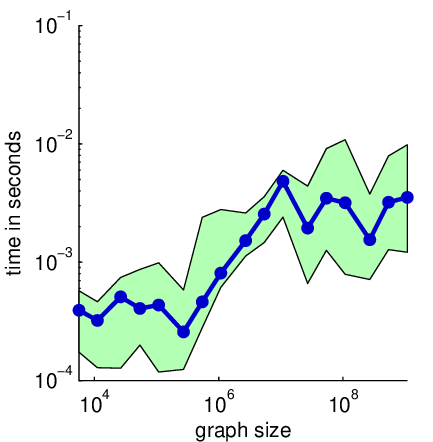}%
\includegraphics[width=0.33\linewidth]{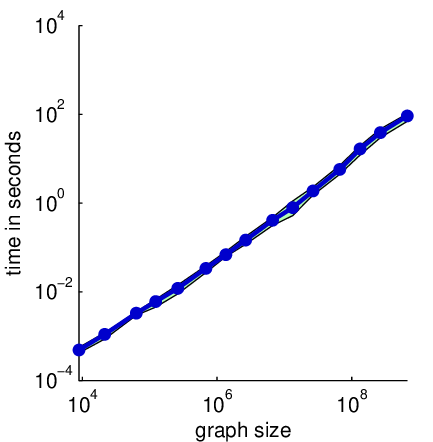}%
\includegraphics[width=0.33\linewidth]{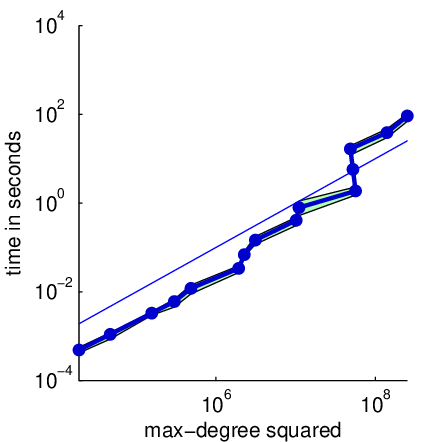}
\caption{The distribution of runtimes for the \gexpm method on two forest-fire graphs (left: $p_f=0.4$, middle, $p_f=0.48$) of various graph sizes, where graph size is computed as the sum of the number of vertices and the number of non-zeros in the adjacency matrix. The thick line is the median runtime over 50 trials, and the shaded region shows the 25\% to 75\% quartiles (the shaded region is very tight for the second two figures). The final plot (right) shows the relationship between the max-degree squared and the runtime in seconds. This figure shows that the runtime scales in a nearly linear relationship with the max-degree squared, as predicted by our theory. The large deviations from the line of best fit might be explained by the fact that only a single forest-fire graph was generated for each graph size.
}
\label{fig:ff-scaling}
\end{figure}

\subsection{Runtime scaling}
The final experimental study we conduct attempts to better understand the runtime scaling of the \gexpmq method. This method yields a prescribed accuracy $\eps$ more rapidly than \gexpm, but the study with real-world networks did not show a clear relationship between error and runtime. We conjecture that this is because the edge density varies too much between our graphs. Consequently, we study the runtime scaling on forest-fire synthetic graphs~\cite{Leskovec-2007-densification}. We use a symmetric variation on the forest-fire model with a single ``burning'' probability. We vary the number of vertices generated by the model to get graphs ranging from around 10,000 vertices to around 100,000,000 vertices.

The runtime distributions for burning probabilities, $p_f$, of $0.4$ and $0.48$ are shown in the left two plots of Figure~\ref{fig:ff-scaling}. With $p_f=0.48$, the graph is fairly dense -- more like the friendster network -- whereas the graph with $p_f=0.4$ is highly sparse and is a good approximation for the webbase graph. Even with billions of edges, it takes less than $0.01$ seconds for $\gexpmq$ to produce a solution with 1-norm error $\eps = 10^{-4}$ on this sparse graph. For $p_f=0.48$ the runtime grows with the graph size.

We find that the scaling of $d^2$ seems to match the empirical scaling of the runtime (right plot), which is a plausible prediction based on Theorem~\ref{thm:work}. (Recall that one of the log factors in the bound of Theorem~\ref{thm:work} arose from the heap updates in the \gexpm method.) These results show that our method is extremely fast when the graph is sufficiently sparse, but does slow down when running on networks with higher edge density. 

\section{Conclusions \textit{\&} Future Work}\label{sec:con}
The algorithms presented in this paper compute a column of the matrix exponential of sparse matrices $\mP$ satisfying $\|\mP\|_1 \leq 1$. They range from \gexpm, with its strong theoretical guarantees on accuracy and runtime, to \gexpmq, which drops the theoretical runtime bound but is empirically faster and provably accurate, to \expmimv, where there is no guarantee about accuracy but there is an even smaller runtime bound. We also showed that they outperform a state of the art Taylor method, \expmv, to compute a column of the matrix exponential experimentally. This suggests that these methods have the potential to become methods of choice for computing columns of the matrix exponential on social networks.

We anticipate that our method will be useful in scenarios where the goal is to compare the matrix exponential with other network measures -- such as the personalized PageRank vector. We have used a variant of the ideas presented here, along with a new runtime bound for a degree-weighted error, to perform such a comparison for the task of community detection in networks~\cite{Kloster-preprint-hkrelax}.

\paragraph{Functions beyond the exponential.}
We believe we can generalize the results here to apply to a larger class of inputs: namely, sparse matrices $\mA$ satisfying $\| \mA \|_1 \leq c$ for small $c$. Furthermore, all three algorithms described in this paper can be generalized to work for functions other than $e^x$. In our future work, we also plan to explore better polynomial approximations of the exponential~\cite{Orecchia-2012-exponential}.

\paragraph{Improved implementations.} Because of the slowdown due to the heap updates in \gexpm, we hope to improve on our current heap-based algorithm by implementing a new data structure that can provide fast access to large entries as well as fast update to and deletion of entries. One of the possibilities we wish to explore is a Fibonacci heap. We also plan to explore parallelizing the algorithms using asynchronous methods. Recent analysis suggests that strong, rigorous runtime guarantees are possible~\cite{Avron-2014-asynchronous}.

\paragraph{New analysis.} Finally, we hope to improve on the analysis for \expmimv. Namely, we believe there is a rigorous relationship between the input graph size, the non-zeros retained, the Taylor degree selected, and the error of the solution vector produced by \expmimv. Recently, one such method was rigorously analyzed~\cite{Deshpande-2013-cliques}, which helped to establish new bounds on the planted clique problem. 


\section*{Acknowledgments}
This research was supported by NSF CAREER award 1149756-CCF.

\bibliographystyle{plainnat}
\bibliography{all-bibliography,newbib}
\normalsize
\appendix
\section{Appendix -- Proofs}\label{sec:app}

\def\thetheorem{\ref{lem:Nbound}}
\begin{lemma*}
Let $\mP$ and $\vb$ satisfy $\|\mP \|_1, \|\vb\|_1 \leq 1$. Then choosing the degree, $N$, of the Taylor approximation, $T_N(\mP)$, such that $N \geq 2\log(1/\eps)$ and $N \geq 3$ will guarantee
\[
\|\expof{\mP} \vb - T_N(\mP)\vb \|_1 \leq \eps
\]
\end{lemma*}

\begin{proof}
We first show that the degree $N$ Taylor approximation satisfies
\begin{equation}
\|\expof{\mP} \vb - T_N(\mP)\vb \|_1 \leq \tfrac{1}{N! N}. \label{taydeg}
\end{equation}
To prove this, observe that our approximation's remainder, $\|\expof{\mP} \vb - T_N(\mP)\vb \|_1$, equals
$\normof*[1]{\sum_{k=N+1}^{\infty} \mP^k \ve_c/k! }$. Using the triangle inequality we can upperbound this by $\sum_{k=N+1}^{\infty} \| \mP^k \|_1 \|\ve_c \|_1 / k! $.We then have
\[
\|\expof{\mP} \vb - T_N(\mP)\vb \|_1 \leq \sum_{k=N+1}^{\infty} \tfrac{1}{k!}
\]
because $\|\mP\|_1 \leq 1$ and $\|\ve_c\|_1 = 1$. By factoring out $1/(N+1)!$ and majorizing $(N+1)!/(N+1+k)! \leq 1/(N+1)^k$ for $k \geq 0 $, we finish:
\begin{equation}
 \|\expof{\mP} \vb - T_N(\mP)\vb \|_1 \leq \lrp{\tfrac{1}{(N+1)!}}\sum_{k=0}^{\infty} \lrp{\tfrac{1}{N+1}}^k = \tfrac{1}{(N+1)!} \tfrac{N+1}{N}
\end{equation}
where the last step substitutes the limit for the convergent geometric series.

Next, we prove the lemma. We will show that $2\log (N!) > N \log N$, then use this to relate $\log(N!N)$ to $\log(\eps)$. First we write $2\log(N!) = 2\cdot\sum_{k=0}^{N-1} \log(1+k) = \sum_{k=0}^{N-1} \log(1+k) + \sum_{k=0}^{N-1} \log(1+k)$. By noting that $\sum_{k=0}^{N-1} \log(1+k) = \sum_{k=0}^{N-1} \log(N-k)$, we can express $2 \log(N!) = \sum_{k=0}^{N-1} \log(k+1) + \sum_{k=0}^{N-1} \log(N-k)$, which is equal to $\sum_{k=0}^{N-1} \log\left((k+1)(N-k)\right)$. Finally, $(k+1)(N-k) = N + Nk -k^2 - k = N + k(N-k-1)\geq N$ because $N \geq k+1$, and so 
\begin{equation}\label{logfact}
2\log(N!) \geq \sum_{k=0}^{N-1} \log(N) = N\log(N).
\end{equation}
By the first claim we know that $1/N!N < \eps$ guarantees the error we want, but for this inequality to hold it is sufficient to have $\log(N!N) > \log(1/\eps)$. Certainly if $\log(N!) > \log(1/\eps)$ then $\log(N!N) > \log(1/\eps)$ holds, so by \eqref{logfact} it suffices to choose $N$ satisfying $N\log(N) > 2\log(1/\eps)$. Finally, for $N\geq 3$ we have $\log(N) > 1$, and so Lemma~\ref{lem:Nbound} holds for $N\geq 3$.
\end{proof}

\def\thetheorem{\ref{lem:Minverse}}
\begin{lemma*}
Let $\mM = (\mI_{N+1} \kron \mI_n - \mS \kron \mA )$, where $\mS$ denotes the $(N+1) \times (N+1)$ matrix with first sub-diagonal equal to $[1/1, 1/2, ... , 1/N]$, and $\mI_k$ denotes the $k\times k $ identity matrix. Then
$\mM\inv = \sum_{k=0}^{N} \mS^k \kron \mA^k.$
\end{lemma*}
\begin{proof}
Because $\mS$ is a subdiagonal matrix, it is nilpotent, with $\mS^{N+1} = 0$. This implies that $\mS\kron \mA$ is also nilpotent, since $(\mS\kron \mA)^{N+1} = \mS^{N+1}\kron\mA^{N+1} = 0\kron\mA = 0$. Thus, we have
\begin{align*}
\mM\left(\sum_{k=0}^{N} \mS^k \kron \mA^k\right) &= (\mI - \mS\kron\mA)\left( \sum_{k=0}^{N} \mS^k \kron \mA^k \right)\\
&= \mI - (\mS\kron\mA)^{N+1} & \text{the sum telescopes}
\end{align*}
which is $\mI$. This proves $(\sum_{k=0}^{N} \mS^k \kron \mA^k)$ is the inverse of $\mM$.
\end{proof}

\def\thetheorem{\ref{lem:linsyserror}}
\begin{lemma*}
Consider an approximate solution $\hat{\vv} = [ \hvv_0; \hvv_1; \cdots; \hvv_N] $ to the linear system 
\[(\mI_{N+1} \kron \mI_n - \mS \kron \mA )[ \vv_0; \vv_1; \cdots; \vv_N] = \ve_1 \kron \ve_c.
\] 
Let $\vx = \sum_{j=0}^N \hat{\vv}_j$, let $T_N(x)$ be the degree $N$ Taylor polynomial for $e^x$, and define $\psi_j(x) = \sum_{m=0}^{N-j} \tfrac{j!}{(j+m)!} x^m$. Define the residual vector $\vr = [\vr_0 ; \vr_1 ; \ldots ; \vr_N]$ by $\vr: = \ve_1 \kron \ve_c - (\mI_{N+1} \kron \mI_n - \mS \kron \mA ) \hat{\vv}$. Then the error vector $T_N(\mA)\ve_c - \vx$ can be expressed
\[
T_N(\mA)\ve_c - \vx = \sum_{j=0}^N \psi_j(\mA)\vr_j.
\]
\end{lemma*}
\begin{proof}
Recall that $\vv = [ \vv_0; \vv_1; \cdots; \vv_N]$ is the solution to equation \eqref{linsys}, and our approximation is $\hvv = [ \hvv_0; \hvv_1; \cdots; \hvv_N]$. We showed in Section \ref{sec:linsys} that the error $T_N(\mA)\ve_c - \vx$ is in fact the sum of the error blocks $\vv_j - \hvv_j$. Now we will express the error blocks $\vv_j - \hvv_j$ in terms of the residual blocks of the system \eqref{linsys}, i.e. $\vr_j$.

The following relationship between the residual vector and solution vector always holds:
$\vr = \eoe{1}{c} - \mM \hvv$,
so pre-multiplying by $\mM\inv$ yields
$ \mM\inv \vr = \vv - \hvv,$
because $\vv = \mM\inv \eoe{1}{c}$ exactly, by definition of $\vv$. Note that $ \mM\inv \vr = \vv - \hvv$
is the error vector for the linear system \eqref{linsys}. Substituting the expression for $\mM\inv$ in Lemma~\ref{lem:Minverse} yields
\begin{equation}\label{errorblocks}
\bmat{ \vv_0 -\hvv_0 \\ \vv_1 -\hvv_1 \\ \vdots \\ \vv_N -\hvv_N } =  \lp \sum_{k=0}^N \mS^k \kron \mA^k \rp \bmat{ \vr_0 \\ \vr_1 \\ \vdots \\ \vr_N }.
\end{equation}

Let $\ve$ be the vector of all 1s of appropriate dimension. Then observe that pre-multiplying equation \eqref{errorblocks} by $\lrp{\ve^T\kron \mI}$ yields, on the left-hand side, $\sum_{j=0}^N (\vv_j - \hvv_j)$. Now we can accomplish our goal of expressing $\sum_{j=0}^N (\vv_j - \hvv_j)$ in terms of the residual blocks $\vr_j$ by expressing the right-hand side $\lrp{\ve^T\kron \mI}\lp \sum_{k=0}^N \mS^k \kron \mA^k \rp \vr$ in terms of the blocks $\vr_j$. So next we consider the product of a fixed block $\vr_{j-1}$ with a particular term $(\mS^k\kron \mA^k)$. Note that, because $\vr_{j-1}$ is in block-row $j$ of $\vr$, it multiplies with only the block-column $j$ of $(\mS^k\kron \mA^k)$, so we now examine the blocks in block-column $j$ of $(\mS^k\kron \mA^k)$.

Because $\mS$ is a subdiagonal matrix, there is only one non-zero in each column of $\mS^k$, for each $k = 0, ... , N$. As mentioned in Section~\ref{sec:gexpm}, $\mS\ve_j = \ve_{j+1}/j$ when $j<N+1$, and 0 otherwise. This implies that
\[
\mS^k\ve_j =
 \begin{cases} 
      \hfill \frac{(j-1)!}{(j-1+k)!}\ve_{j+k} ,  \hfill & \text{ if } 0\leq k \leq N+1-j \\
      \hfill 0 , \hfill & \text{otherwise.} \\
  \end{cases}
\] 
Thus, block-column $j$ of $(\mS^k \kron \mA^k)$ contains only a single non-zero block, $(j-1)!\mA^k/(j-1+k)!$, for each $k = 0, ... , N+1-j$. Hence, summing the $n \times n$ blocks in block-column $j$ of all powers $(\mS^k \kron \mA^k)$ for $k=0, ... , N$ yields
\begin{equation}\label{rescoeff}
\sum_{k=0}^{N+1-j} \tfrac{(j-1)! }{(j-1+k)!} \mA^k
\end{equation}
as the matrix coefficient of the term $\vr_{j-1}$ in the expression
$(\ve^T \kron \mI)(\sum_{k=0}^N \mS^k \kron \mA^k) \vr$.
Thus, we have
\[
(\ve^T \kron \mI)\lrp{\sum_{k=0}^N \mS^k \kron \mA^k} \vr = \sum_{j=1}^{N+1} \lrp{ \sum_{k=0}^{N+1-j}\tfrac{(j-1)!}{(j-1+k)!)} \mA^k }\vr_{j-1}
\]
Finally, reindexing so that the outer summation on the right-hand side goes from $j=0$ to $N$, then substituting our definition for $\psi_j(\mA) = \sum_{k=0}^{N-m} \frac{j!}{(j+k)!} \mA^k$, we have that 
$\sum_{j=0}^N (\vv_j - \hvv_j) = \sum_{j=0}^N \psi_j(\mA) \vr_j,$ as desired.
\end{proof}

\begin{lemma*}[From~\cite{Kloster-2013-nexpokit}]
Let $\psi_j(x) = \sum_{m=0}^{N-j} \frac{j!}{(j+m)!} x^m$. Then $\psi_j(1) \leq \psi_0(1) \leq \exp(1)$.
\end{lemma*}
\begin{proof}
By definition, $\psi_j(1) = \sum_{m=0}^{N-j} \frac{j!}{(j+m)!}$ and $\psi_{j+1}(1) = \sum_{m=0}^{N-j-1} \frac{(j+1)!}{(j+1+m)!}$. Note that the sum for $\psi_j(1)$ has more terms, and the general terms of the two summations satisfy $\frac{j!}{(j+m)!} \geq \frac{(j+1)!}{(j+1+m)!}$ because multiplying both sides by $\frac{(j+m)!}{j!}$ yields $1 \geq \frac{j+1}{j+1+m}$. Hence $\psi_j(1) \geq \psi_{j+1}(1)$ for $j=0, ..., N-1$, and so the statement follows.

To see that $\psi_0(1) \leq \exp(1)$, note that $\psi_0(1)$ is the degree $N$ Taylor polynomial expression for $\exp(1)$, which is a finite approximation of the Taylor series, an infinite sum of positive terms; hence, $\psi_0(1) \leq \exp(1)$.
\end{proof} 

\end{document}